\newtheorem{mydef}{Definition}
\newtheorem{mythm}{Theorem}
\newtheorem{myprob}{Problem}
\newtheorem{mylem}{Lemma}
\newtheorem{mypro}{Proposition}
\newtheorem{myexm}{Example}
\def \O{{\mathcal{O}}}
\def \S{{\mathfrak{S}}}
\def \B{{\mathfrak{B}}}
\def \L{{\mathcal{L}}}
\def \V{{\mathbb{V}}}
\def \UR{{\text{UR}}}
\def \OR{{\text{OR}}}
\def \state{{\textsf{state}}}
\def \vecc{{\textsf{vec}}}
\title{On Prediction-Based Properties \\ of Discrete-Event Systems:\\ Notions, Applications and Supervisor Synthesis}
\author{{Bohan Cui}, \IEEEmembership{Student Member, IEEE}, {Yu Chen}, \IEEEmembership{Student Member, IEEE}, Alessandro Giua, \IEEEmembership{Fellow, IEEE},\\and Xiang Yin, \IEEEmembership{Member, IEEE}

\thanks{This work was supported by  the National Natural Science Foundation of China (62173226, 62061136004, 92367203). }
\thanks{B. Cui, Y. Chen and X. Yin are with the School of Automation and Intelligent Sensing, Shanghai Jiao Tong University, Shanghai 200240, China, and also with the Key Laboratory of System Control and Information Processing, the Ministry of Education of China, Shanghai 200240, China. {\tt  E-mail: \{bohan\_cui, yuchen26, yinxiang\}@sjtu.edu.cn}.
Alessandro Giua is with the Department of Electrical and Electronic Engineering, University of Cagliari, Cagliari 09123, Italy. {\tt  E-mail: giua@unica.it}.}
\thanks{Corresponding Author: Xiang Yin.}}
\begin{document}

\maketitle

\begin{abstract}
In this work, we investigate the problem of synthesizing property-enforcing supervisors for partially-observed discrete-event systems (DES). Unlike most existing approaches, where the enforced property depends solely on the executed behavior of the system,  here we consider a more challenging scenario in which the property relies on predicted future behaviors that have not yet occurred. 
This problem arises naturally in applications involving future information, such as active prediction or intention protection.
To formalize the problem, we introduce the notion of  \emph{prediction-based properties}, a new class of observational properties tied to the system's future information. 
We demonstrate that this notion is very generic and can model various practical properties, including predictability in fault prognosis and pre-opacity in intention security.
We then present an effective approach for synthesizing supervisors that enforce prediction-based properties.
Our method relies on a novel information structure that addresses the fundamental challenge arising from the dependency between current predictions and the control policy. The key idea is to first borrow information from future instants and then ensure information consistency. This reduces the supervisor synthesis problem to a safety game in the information space.
We prove that the proposed algorithm is both sound and complete, and the resulting supervisor is maximally permissive.
\end{abstract}

\begin{IEEEkeywords}
Discrete-Event Systems, Supervisory Control Theory, State Predication, Partial Observation.
\end{IEEEkeywords}

\IEEEpeerreviewmaketitle

\section{Introduction}

\subsection{Motivations}

\IEEEPARstart{T}{his} paper investigates the problem of supervisory control for partially observed discrete-event systems (DES). 
A supervisor is a mechanism that regulates the behavior of the system by dynamically enabling or disabling events based on its observation sequence \cite{cassandras2008introduction,wonham2019supervisory}. 
In particular, due to limitations in actuators and sensors, a partial-observation supervisor must account for both uncontrollability and observability issues, ensuring that the closed-loop behavior always satisfies some desired high-level specification despite disturbances and information uncertainty. 
Since the seminal work of Ramadge and Wonham, supervisory control theory has been extensively developed and widely adopted as a formal controller synthesis framework for various engineering systems.
Recent successful applications of supervisory control theory include, for example, production systems \cite{thuijsman2024supervisory}, industrial control systems \cite{reijnen2022supervisory}, multi-robot systems \cite{rosa2024modular} and  lock-bridge systems \cite{reijnen2020modeling}. 

In the partial observation setting, due to the presence of unobservable events, 
the system state cannot be perfectly known, and state estimation needs to be performed based on the observed event sequence, i.e., the information-flow. 
In such cases, formal specifications in system verification and synthesis often pertain to the information-flow generated by the system \cite{hadjicostis2020estimation}. 
A typical requirement involves the current-state estimate, for instance, ensuring that a user gains sufficient information to resolve current ambiguity between different states, 
as in diagnosability \cite{sampath1998active}, observability \cite{lin1988observability}, and detectability \cite{shu2007detectability}. 
Conversely, it may also require that an external intruder lacks sufficient information to uncover critical secrets, as in current-state opacity \cite{lin2011opacity,ritsuka2025joint}.
Furthermore, the system may use observations up to the current instant to infer past states. 
Properties related to such backward inference  include delayed detectability \cite{shu2012delayed} and K-/infinite-step opacity 
\cite{balun2021comparing,wintenberg2022general}.

In systems theory, alongside filtering (current state estimation) and smoothing (delayed state estimation), \emph{prediction} serves as another fundamental component. 
Here, the focus lies in forecasting the system’s future behavior and verifying whether it satisfies certain desired specifications. 
Observational properties related to future information have also been explored in the partially-observed DES literature. Below are two typical application scenarios:
\begin{itemize}
    \item 
    \textbf{Fault Prediction: } 
    To ensure safe operation, system users may need to predict whether the system will enter critical states or exhibit unsafe behavior patterns. 
    If such behavior can be anticipated, proactive actions can be taken to prevent catastrophic outcomes. 
    In \cite{genc2009predictability}, the notions of predictability and its variants were proposed as necessary and sufficient conditions for fault prediction with no missed or false alarms.
    \item 
    \textbf{Intention Security: } 
    Conversely, from a security perspective, some systems may aim to remain as unpredictable as possible. 
    For instance, if a robot intends to visit a critical region, it should prevent malicious observers from inferring its target too far in advance.
    Along these lines, the concept of pre-opacity has been studied in the literature \cite{yang2022secure} to capture the system’s ability to conceal critical intention for future actions.
\end{itemize}

\textbf{Future-Dependency Challenge in Synthesis. }
Most existing works on prediction-related  properties focus solely on the verification problem. 
When the system's future behavior fails to meet desired requirements, such as untimely fault prediction or premature intention disclosure,  it becomes necessary to design supervisors that restrict the system’s behavior, ensuring the closed-loop system satisfies the specifications. 
However, the supervisor synthesis problem for prediction-related properties is significantly more challenging than both its verification counterpart and standard supervisor synthesis for current-state properties (e.g., current-state opacity). The fundamental difficulty comes from the future-dependency issue. Specifically, \emph{evaluating whether the  future behavior of the closed-loop system satisfies desired specifications requires knowledge of future control decisions that have not yet been synthesized.}
This issue notably does not appear in current-state property synthesis, where the separation principle allows control decisions to be evaluated based solely on completed system trajectories. Similarly, the verification problem avoids this complication entirely since it examines fixed system behaviors without any control intervention.

\subsection{Our Results and Contributions}
In this work, we present a general framework for synthesizing property-enforcing supervisors for partially-observed DES with respect to predicted future behaviors. 
Specifically, our main contributions are summarized as follows.
\begin{itemize} 
    \item 
    First, we introduce a new class of observational properties related to the future information of the system, called \emph{prediction-based properties}. 
    To formalize this, for each observation, we consider the set of predicted reachable states for each future instant. 
    We use an evaluation function as a predicate that evaluates whether or not the membership status of each reachable set satisfies some requirement.
    A system is said to satisfy a prediction-based property if this evaluation prediction holds for all possible observations.
    \item 
    We then demonstrate that this new definition of prediction-based properties is very generic and by using specific evaluation functions, it can model many useful properties in different application scenarios, including fault prognosis and intention security. 
    Specifically, we show that notions such as predictability and pre-opacity (with guaranteed performance bounds) can be captured by this framework. 
    Furthermore, we extend the definition to accommodate more complex requirements, such as anonymity for future intentions.
    \item 
    Finally, we present an effective approach for solving the supervisor synthesis problem to enforce prediction-based properties. 
    Our method addresses the fundamental challenge of synthesizing control strategies for future information by introducing a novel technique. 
    The key idea is to first borrow information from future instants, where control decisions have not yet been determined, and then commit to the borrowed future states when making control decisions.  
    Based on this concept, we develop a game-based synthesis algorithm, which is both sound and complete, over a new  information state space that effectively captures all possible future configurations.  
\end{itemize}

\subsection{Related Works}
In the following, we discuss existing works closely related to our work and highlight the key differences with our results.

\subsubsection{ Supervisor Control under Partial Observation}   
Supervisory control under partial observation has been extensively studied since the early development of SCT \cite{lin1988observability, cieslak1988supervisory}.
The most basic specification is safety, where the system must avoid certain illegal behaviors. 
Safety can be determined based on the current state of the system with some suitable state-space refinement \cite{yin2016synthesis}.
For more complex properties, such as  diagnosability \cite{hu2020design,hu2021diagnosability,cao2024active}, current-state opacity \cite{dubreil2010supervisory,tong2018current,barcelos2021enforcing,moulton2022using} and strong detectability \cite{shu2013enforcing}, their supervisor synthesis problems can also be viewed as safety enforcements on the current-state estimate. In \cite{yin2016uniform}, a unified approach was proposed for synthesizing supervisors by constructing a game structure for a general class of properties that can be evaluated using the current information-state.  
However, this approach cannot be used for enforcing predication-based properties in our work as the information structure in \cite{yin2016uniform} cannot 
handle the future dependency challenge.

\subsubsection{Supervisor Control with Delayed Information}
Some observational properties, such as $K$-step opacity, infinite-step opacity, or delayed detectability, involve determining whether the state status of the system can be inferred after certain information delays. 
While these properties may appear to depend on future information, they can, in fact, be fully resolved using only the current information, as they are essentially smoothing problems. 
Specifically, given an observation sequence, one can construct all delayed state estimates along the trajectory. 
By employing efficient information-state representations, the game-based synthesis approach remains applicable even to these types of properties with delayed information \cite{yin2019supervisory,yin2020synthesis,wintenberg2021enforcement,liu2022enforcement,zhang2023polynomial,xie2024optimal}.

\subsubsection{Notions and Verification of Prediction-Based Properties}
In the literature, several specific notions of prediction-based properties, along with their verifications, have been studied. 
The first observational property involving future information is the notion of predictability or prognosability \cite{watanabe2021fault,chen2014stochastic,ran2022prognosability}. 
This property can be verified in polynomial-time by checking the distinguishability between boundary states and non-indicator states \cite{takai2015robust,you2019verification}.
In contrast to predictability, the notion of pre-opacity captures the requirement that certain critical information must not be predictable too far in advance \cite{yang2022secure,hou2022abstraction}. 
This property can also be efficiently verified using the observer structure.
In such verification problems, the future behavior of the system can be analyzed based on the original plant model, which remains fixed since no control is applied. The observer structure can then be used to examine the (in)distinguishability of states with differing future behaviors. However, such an approach cannot be  extended to synthesis problems, as the future behavior depends on the supervisor being synthesized. 
Furthermore, existing works study prediction-related properties on a case-by-case basis, whereas our work introduces a general framework for prediction-based properties.
 
\subsubsection{Control Synthesis of Prediction-Based Properties} 
To our knowledge, there are only very few works addressing the control synthesis problem for properties involving future information. One exception is \cite{haar2020active}, where the authors study the action prediction problem and propose methods for synthesizing controllers to ensure predictability. They show that this problem can be reduced to B\"{u}chi games. However, compared to our work, the  technique in \cite{haar2020active} is tailored specifically to predictability and requires all controllable events to be observable, a restriction not imposed in our framework. Another related work is \cite{chen2023you}, which investigates unpredictable planning for co-safe linear temporal logic tasks. Nevertheless, \cite{chen2023you} assumes a system model without unobservable events, which significantly simplifies the synthesis challenge. Moreover, their approach is also specialized for unpredictability, unlike the general framework presented in this work.

\subsection{Organization}

The rest of this article is organized as follows: Section~\ref{sec:pre} introduces the necessary preliminaries. In Section~\ref{sec:problem formulation}, we define the prediction-based properties and formulate the control problem for enforcing these properties. Section~\ref{sec: inform-state} presents a novel class of information states for prediction-based properties. In Section~\ref{sec:syhthesis}, we introduce the concept of BTS and propose a synthesis algorithm to design a maximally permissive partial-observation supervisor for enforcing prediction-based properties. Finally, Section~\ref{sec: conc} concludes the article.

\section{Preliminary}\label{sec:pre}
\subsection{System Model}
Let $\Sigma$ be a finite set of events. 
A \emph{string} is a finite sequence of events, and we denote by $\Sigma^*$ the set of all strings over $\Sigma$ including the empty string $\epsilon$. 
For any string $s \in \Sigma^*$, its length  is denoted by $|s|$ with $|\epsilon|=0$. 
For any integer $k$, we denote by $\Sigma^k=\{s\in \Sigma^*\mid |s|=k\}$ 
the set of strings with length $k$. 
A language $L \subseteq \Sigma^*$ is a set of strings. 
For any string $s\in L$, we denote by $L/s$ the post-language of $s$ in $L$, i.e., $L/ s:=\{w\in \Sigma^*\mid sw\in L\}$. 
The prefix-closure of $L$ is denote by $\Bar{L}$, i.e., $\Bar{L}=\{ u \in \Sigma^* \mid \exists v \in \Sigma^* \text{ s.t. } uv \in L \}$.
A language $L\subseteq \Sigma^*$ is said to be \emph{live} if $\forall s\in L,\exists \sigma\in \Sigma: s\sigma\in L$.

We consider a discrete-event system modeled by a deterministic finite-state automaton (DFA)
\[G=(X,\Sigma,\delta,x_0),\] 
where 
$X$ is the finite set of states, 
$\Sigma$ is the finite set of events, 
$\delta:X\times \Sigma \to X$ is the partial deterministic transition function such that 
$\delta(x,\sigma)=x'$ means that there exists a transition from $x$ to $x'$ with event $\sigma$, 
and $x_0 \in X$ is initial state. 
The domain of the transition function $\delta$ can also be extended to 
$\delta: X\times \Sigma^*\rightarrow X$  recursively by: 
for any $x\in X, s\in \Sigma^*, \sigma\in \Sigma$,
we have $\delta(x, s\sigma) = \delta(\delta(x, s), \sigma)$ with $\delta(x, \epsilon)=x$.
The language generated by $G$ from state $x$ is defined by 
$\mathcal{L}(G,x)=\{s\in \Sigma^*\mid\delta(x,s)!\}$, where ``!" means ``is defined". 
We also define $\mathcal{L}(G, Q):=\bigcup_{x\in Q}\mathcal{L}(G,x)$ as the language generated from a set of states $Q\subseteq X$, and define the language generated by $G$ as $\mathcal{L}(G):=\mathcal{L}(G, x_0)$. 
For simplicity, for string $s\in \mathcal{L}(G)$, we write $\delta(x_0,s)$  as $\delta(s)$. 
For technical purposes, we assume that system $G$ is live, i.e., $\forall x\in X,\exists \sigma\in \Sigma: \delta(x,\sigma)!$. 

For partially-observed DES, we assume that the event set is further partitioned as \vspace{-3pt}
\[
\Sigma=\Sigma_o \dot{\cup} \Sigma_{uo},
\]
where $\Sigma_o$ is the set of observable events and $\Sigma_{uo}$ is the set of unobservable events. 
The occurrence of each event is imperfectly observed through a natural projection $P: \Sigma^* \to \Sigma_{o}^{*}$ defined as follows:
\begin{equation}
    P(\epsilon)=\epsilon \text{ and } \begin{aligned}
	P(s\sigma) = 
		\left\{
		\begin{array}{ll}
			 P(s)\sigma & \text{if}\quad    \sigma \in \Sigma_o  \\
			 P(s)                & \text{if}\quad    \sigma \in \Sigma_{uo} 
		\end{array}
		\right..   
\end{aligned} 
\end{equation}
The inverse projection $P^{-1}:\Sigma^{*}_{o} \to 2^{\Sigma^*}$ is defined by $P^{-1}(\alpha):=\{ s \in \mathcal{L}(G)\mid P(s)=\alpha \}$. 
For any observation $\alpha\in P(\mathcal{L}(G))$, 
the \emph{current-state estimate} is the set of all possible states the system could be in currently when $\alpha$ is observed, i.e.,
$\mathcal{E}(\alpha)=\{ \delta(s) \in X \mid s \in P^{-1}(\alpha)\}$.

\subsection{Supervisory Control}
In the supervisory control framework \cite{cassandras2008introduction}, a supervisor can restrict the behavior of the system $G$ by dynamically disabling/enabling some system events. In this setting, the event set $\Sigma$ is further partitioned as 
\[
\Sigma= \Sigma_c \dot{\cup} \Sigma_{uc},
\]
where $\Sigma_c$ is the set of controllable events and $\Sigma_{uc}$ is the set of uncontrollable events. 
A control decision $\gamma\in 2^\Sigma$ is said to be valid if $\Sigma_{uc}\subseteq\gamma$, namely, uncontrollable events can never be disabled.
We define $\Gamma=\{ \gamma \in 2^{\Sigma} \mid \Sigma_{uc} \subseteq \gamma \}$ as the set of valid control decisions. 
Since a supervisor can only make decisions based on its observations, a partial-observation supervisor is a function 
\[
S : P(\mathcal{L}(G)) \to \Gamma.
\]
We use the notation $S/G$ to represent the closed-loop  system under control. 
The language generated by $S /G$, denoted by $\mathcal{L}(S/G)$, is defined recursively as follows:
\begin{itemize}
    \item [1)]$\epsilon\in \mathcal{L}(S/G)$; and
    \item [2)]for any $s\in \Sigma^*$, $\sigma\in\Sigma$, we have 
    $s\sigma\in \mathcal{L}(S/G)$ iff 
    (i) $s\sigma\in \mathcal{L}(G)$, (ii) $s\in \mathcal{L}(S/G)$, and (iii) $\sigma\in S(P(s))$.
\end{itemize}

Note that, when supervisor $S$ is given, upon observing $\alpha\in P(\mathcal{L}(S/G))$, the state estimate is more precise as some strings in the original systems are disabled. 
Formally, we extend the natural projection and state estimate  for the open-loop case to the closed-loop setting by
\begin{align}
     P^{-1}_S(\alpha):=&\{ s \in \mathcal{L}(S/G) \mid P(s)=\alpha \}\nonumber\\
     \mathcal{E}_S(\alpha):=&\{ \delta(s) \in X\mid s \in P^{-1}_S(\alpha)\},\nonumber
\end{align}
where we use a subscript to emphasize that the system behavior is controlled by supervisor $S$. 

Let $q\subseteq X$ be a set of states,  $\gamma \in \Gamma$ be a control decision 
and $\sigma \in \Sigma_o$ be an observable event. 
Then the \emph{unobservable reach} of $q$ under $\gamma$ is defined by 
\[
\text{UR}_\gamma(q)=\{ \delta(x,w) \in X \mid  x \in q, w \in (\Sigma_{uo}\cap \gamma)^* \}.
\]
The \emph{observable reach} of $q$ upon $\sigma\in\Sigma_o$ is defined by  
\[
\text{OR}_{\sigma}(q)=\{ \delta(x,\sigma)\in X \mid  x \in q  \}.
\]

\section{Notions of Prediction-Based Properties}\label{sec:problem formulation}
In this section, we formally provide the general definition of prediction-based properties. 
Then we provide several specific instances of the general definition motivated by the applications of intention security and fault prognosis. Since our focus is on the synthesis problem, all definitions are provided directly for the closed-loop systems.
\subsection{State Predictions and Prediction-Based Properties}
For any   string $s \in \mathcal{L}(S/G)$ in the closed-loop system, we can predict future system states using the plant model $G$ and supervisor $S$. Formally, the $\mathbf{k}$\textbf{-step reachable set} is defined as the set of all possible states reachable in exactly $k$ steps following string $s$, given by:
\begin{equation}\label{eq:k-reachable-set}
    \text{Reach}_k(s)=\{\delta(st)\in X \mid  
    st\!\in\! \mathcal{L}(S/G) \wedge |t|\!=\!k\}.
\end{equation}

In this work, we investigate prediction properties related to whether a system will reach a set of \emph{critical states} $X_C \subseteq X$. 
For each prediction, there are three possible outcomes:  
\begin{itemize}
    \item 
    The system will  reach critical states for sure; 
    \item 
    The system  will not  reach critical states for sure;
    \item 
    It is uncertain whether the system will reach critical states. 
\end{itemize}
To formalize this, we define a \textbf{membership function} over the three-value domain \( \{\texttt{Y}, \texttt{N}, \texttt{U}\} \):  
\[
\chi_C \colon 2^X \to \{\texttt{Y}, \texttt{N}, \texttt{U}\},
\]  
where  for any set of states $q\subseteq X$, we have 
\begin{equation}
\chi_C(q) = 
\begin{cases}
    \texttt{Y} & \text{if } q \subseteq X_C, \\
    \texttt{N} & \text{if } q \cap X_C = \emptyset, \\
    \texttt{U} & \text{otherwise.}
\end{cases}
\end{equation}

By considering \(q\) as \(\text{Reach}_k(s)\), for each current string \(s\), the membership status 
(\texttt{Y}/\texttt{N}/\texttt{U}) may vary at different future instants \(k\). 
To bound our analysis, we assume the property of interest is evaluated over a finite prediction horizon  
\( H \geq 1 \). 
We then encode the membership status over this horizon as a vector.

\begin{mydef}[\bf Prediction Vectors]\upshape\label{def:pre-vec}
Given system $G$ and supervisor $S$, 
for any string $s\in \mathcal{L}(S/G)$, the prediction vector 
(w.r.t.\ critical states $X_C$ and prediction horizon $H$) is a $H+1$ dimensional vector over $\{\texttt{Y}, \texttt{N}, \texttt{U}\}$ defined by
\begin{equation}
   \xi^S(s)=(\xi^S(s)[0],\xi^S(s)[1],\dots, \xi^S(s)[H]) \in \{\texttt{Y}, \texttt{N}, \texttt{U}\}^{H+1},
\end{equation}
where for each $k=0,1,\dots,H$, 
we  have 
\begin{equation}
    \xi^S(s)[k]= \chi_C(\text{Reach}_k(s)).
\end{equation}
We denote by $\V=\{\texttt{Y}, \texttt{N}, \texttt{U}\}^{H+1}$ the set of all prediction vectors with horizon $H$.\hfill$\blacksquare$
\end{mydef}

The prediction vector described above is generated based on the system's actual executed string. However, in partial observation settings, where the actual string is not directly observable, predictions must instead rely on the inverse projection of observed events. To formally characterize prediction under partial observation, 
we assume that the system updates its prediction \emph{immediately} upon each new observable event occurrence. 
To this end, we define 
\begin{equation}
    \mathcal{L}_o(S/G):= (\mathcal{L}(S/G) \cap \Sigma^*\Sigma_o) \cup \{ \epsilon \}
\end{equation}
as the set of strings that end up with observable events including the empty string. Then  for each observation $\alpha \in P(\mathcal{L}(S/G))$, we define 
\[
\O(\alpha)=\{ s \in \mathcal{L}_o(S/G): P(s)=\alpha \}  
\]
as the set of observationally equivalent strings that terminate with observable events, 
which is the set of possible strings immediately when the system observes $\alpha$. These strings serve as the starting points for predictions. Since such observationally equivalent strings are generally not unique, the prediction vectors belong to a set. 

\begin{mydef}[\bf Prediction Sets]\upshape\label{def:pre-set}
Given system $G$ and supervisor $S$, 
for any observation $\alpha \in P(\L(S/G))$, the prediction set
(w.r.t.\ critical states $X_C$ and prediction horizon $H$), denoted by $\Xi^S(\alpha)$,   
is the set of prediction vectors for strings in $ s\in \mathcal{O}(\alpha)$, i.e., 
\begin{equation}
    \Xi^S(\alpha)=\{  \xi^S(s) \in \V:   s\in \mathcal{O}(\alpha)\}.
\end{equation}
We denote by $\mathbf{\Xi}\!=\!2^{\V}$ the set of  all possible prediction sets. \hfill$\blacksquare$
\end{mydef} 

\begin{figure}
  \centering
    \subfigure[$G$\label{fig:G}]{\centering   \begin{tikzpicture}[->,>={Latex}, thick, initial text={}, node distance=1cm, initial where=above, thick, base node/.style={circle, draw, minimum size=5mm,font=\footnotesize}]  
   \node[state, initial, base node, ] (0) {$0$};
   \node[state, base node, ] (1) [left of=0] {$1$};
   \node[state, base node, ] (2) [below of=1] {$2$};
   \node[state, base node, ] (3) [below of=0] {$3$};
   \node[state, base node, ] (4) [below of=3] {$4$};
   \node[state, base node, ] (5) [below of=2] {$5$};
   \node[state, base node, ] (6) [below of=4] {$6$};
   \node[state, base node, fill=red] (7) [below of=5] {$7$};
   
   \path[->]
   (0) edge node [yshift=0.2cm] {$a$} (1)
   (1) edge node [xshift=-0.2cm] {$b$} (2)
   (2) edge node [xshift=-0.2cm] {$o_2$} (5)
   (5) edge node [xshift=-0.2cm] {$e$} (7)
   (0) edge node [xshift=-0.2cm] {$c$} (3)
   (3) edge node [xshift=-0.2cm] {$b$} (4)
   (3) edge node [yshift=0.2cm] {$o_1$} (5)
   (4) edge node [xshift=-0.2cm] {$o_1$} (6)
   (1) edge [loop left ] node         [xshift=0.2cm, yshift=0.3cm]     {$d$} ()
   (2) edge [loop left ] node         [xshift=0.2cm, yshift=0.3cm]     {$o_1$} ()
   (6) edge [loop below ] node         [xshift=-0.4cm, yshift=0.3cm]     {$o_1$} ();
   \draw[->]
   (7) to [bend left=45] node [xshift=-0.2cm, yshift=0.2cm] {$o_1$} (5);
   \end{tikzpicture}}
    \subfigure[$S_1/G$\label{fig:G1}]{\centering   \begin{tikzpicture}[->,>={Latex}, thick, initial text={}, node distance=1cm, initial where=above, thick, base node/.style={circle, draw, minimum size=5mm,font=\footnotesize}]  
   \node[state, initial, base node, ] (0) {$0$};
   \node[state, base node, ] (3) [below of=0] {$3$};
   \node[state, base node, ] (4) [below of=3] {$4$};
   \node[state, base node, ] (5) [left of=4] {$5$};
   \node[state, base node, ] (6) [below of=4] {$6$};
   \node[state, base node, fill=red] (7) [below of=5] {$7$};

   \draw[->, dashed] (0) -- node [yshift=0.2cm] {$a$}(-1cm, 0cm);
   
   \path[->]
   (5) edge node [xshift=-0.2cm] {$e$} (7)
   (0) edge node [xshift=-0.2cm] {$c$} (3)
   (3) edge node [xshift=-0.2cm] {$b$} (4)
   (3) edge node [yshift=0.2cm] {$o_1$} (5)
   (4) edge node [xshift=-0.2cm] {$o_1$} (6)
   (6) edge [loop below ] node         [xshift=-0.4cm, yshift=0.3cm]     {$o_1$} ();
   \draw[->]
   (7) to [bend left=45] node [xshift=-0.2cm, yshift=0.2cm] {$o_1$} (5);
   \end{tikzpicture}}  
    \subfigure[$S_2/G$\label{fig:G2}]{\centering   \begin{tikzpicture}[->,>={Latex}, thick, initial text={}, node distance=1cm, initial where=above, thick, base node/.style={circle, draw, minimum size=5mm,font=\footnotesize}]  
   \node[state, initial, base node, ] (0) {$0$};
   \node[state, base node, ] (1) [left of=0] {$1$};

   \draw[->, dashed] (0) -- node [xshift=-0.2cm] {$c$}(0cm, -1cm);
   \draw[->, dashed] (1) -- node [xshift=-0.2cm] {$b$}(-1cm, -1cm);
   
   \path[->]
   (0) edge node [yshift=0.2cm] {$a$} (1)
   (1) edge [loop left ] node         [xshift=0.2cm, yshift=0.3cm]     {$d$} ()
   ;
   \end{tikzpicture}}\\
  \caption{For   $G$, we have $\Sigma_{o}=\{o_1,o_2\}$, $\Sigma_{c}=\{a,b,c\}$ and $X_S=\{7\}$.}
	\label{fig:example1}
\end{figure}
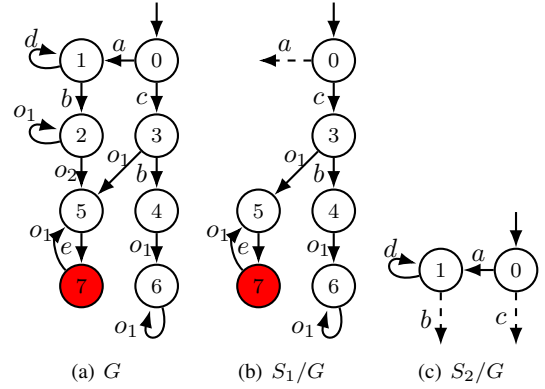
We illustrate the above concepts with the following example.
\begin{myexm}\label{example1}
Let us consider system $G$ shown in Figure~\ref{fig:G} with observable events $\Sigma_o = \{o_1, o_2\}$. 
First, we assume that supervisor $S$ disables nothing, i.e., $\mathcal{L}(G)=\mathcal{L}(S/G)$. 
For string $s=abo_1$, we have 
\[
\text{Reach}_{0}(s)\!=\! \{2\},  
\text{Reach}_{1}(s) \!=\! \{2,5\}, 
\text{Reach}_{2}(s) \!= \!\{2,5,7\}.
\]
Suppose that the critical states of interest are $X_C= \{7\}$. 
Then the prediction vector for string $s=abo_1$ with  $H=2$ is 
\[
\xi^S(abo_1)=( \texttt{N},\texttt{N}, \texttt{U} ).
\]
Note that the system can only observe $\alpha=P(abo_1)=o_1$, 
and it is also possible that the actual string is $t=co_1\in \O(o_1)$. 
For this string, we have
\[
\text{Reach}_{0}(t)\!=\! \{5\},  
\text{Reach}_{1}(t) \!=\! \{7\},   
\text{Reach}_{2}(t) \!= \!\{5\}, 
\]
and the prediction vector for string $t=abo_1$ with $H=2$ is 
\[
\xi^S(co_1)=( \texttt{N},\texttt{Y}, \texttt{N} ).
\]
Similarly, for string $cbo_1\in \O(o_1)$, 
we have $\xi^S(cbo_1)=( \texttt{N},\texttt{N}, \texttt{N} )$.
Overall, for observation $\alpha=o_1$, its predication set w.r.t.\ $X_C= \{7\}$
is 
\begin{equation}\label{eq:xio1}
    \Xi^S(o_1)=\{( \texttt{N},\texttt{Y}, \texttt{N} ),( \texttt{N},\texttt{N}, \texttt{U} ),( \texttt{N},\texttt{N}, \texttt{N}) \}.
\end{equation}
 
\end{myexm}

Therefore, upon each string $s\in \mathcal{L}_o(S/G)$, the system observes $P(s)$, and estimates all possible strings $\O(P(s))$ as well as predicting their membership vectors $\Xi^S(P(s))$. 
To capture the desired future information patterns, 
we use a generic predicate on the predication set $\Xi^S(P(s))$ to evaluate its satisfication status.

\begin{mydef}[\bf Evaluation Functions]\upshape
An evaluation function is a predicate on predication sets of the form
\begin{equation}
    \Phi: \mathbf{\Xi}\to \{0,1\}.\vspace{-6pt}
\end{equation}\hfill$\blacksquare$
\end{mydef}

Now, we are ready to formally introduce the definition of prediction-based properties. 

\begin{mydef}[\bf Prediction-Based Properties]\upshape
A prediction-based  property is a tuple $(X_C,\Phi)$,  
where 
$X_C\subseteq X$ is a set of critical states 
and 
$\Phi: \mathbf{\Xi}\to \{0,1\}$ is an evaluation function. 
Furthermore, given system $G$ and supervisor $S$, we say 
\begin{itemize}
    \item 
    an observation  $\alpha\in P(\mathcal{L}(S/G))$ satisfies $(X_C,\Phi)$, 
    denoted by $\alpha\models (X_C,\Phi)$, if its prediction set satisfies the evaluation function, i.e.,  
    \begin{equation}
        \Phi(  \Xi^S( \alpha )   )=1.
    \end{equation}
    \item 
    the closed-loop system $S/G$ satisfies $(X_C,\Phi)$, 
    denoted by $S/G \models (X_C,\Phi)$, if all observations satisfy $(X_C,\Phi)$, i.e., 
     \begin{equation}
      \forall \alpha\in P(\mathcal{L}(S/G)):   \Phi(  \Xi^S( \alpha )   )=1.\vspace{-6pt}
     \end{equation}\hfill$\blacksquare$
\end{itemize}
\end{mydef}
\begin{myexm}
We continue to consider the running example shown in Figure~\ref{fig:G} with \( H = 2 \).
Let \(\Phi\) be an evaluation function defined as follows: 
for any \(\Xi^S \in 2^\V\), we have \(\Phi(\Xi^S) = 0\) if and only if 
\(\exists k \in \{0,1,2\}, \forall \xi^S \in \Xi^S: \xi^S[k] = \texttt{Y}\). 
In other words, an observation \(\alpha\) satisfies \((X_C, \Phi)\) if, at any future instant within the next two steps, it cannot definitively determine that the system will reach \(X_C\) at that specific time. As we will elaborate in Section~\ref{sec:3.2}, this property captures the notion of pre-opacity.  

Clearly, if the supervisor disables nothing, 
then for observation \(\alpha = o_1\), we have \(o_1 \models (X_C, \Phi)\). This holds because, for the prediction set \(\Xi^S(o_1)\) in \eqref{eq:xio1}, at each time instant \(k = 0,1,2\), there exists at least one prediction vector whose corresponding component is not \(\texttt{Y}\).  
However, for observation $\beta=o_2$, we have $\mathcal{O}(o_2) = \{ad^n bo_2 \mid n \in \mathbb{N}\}$. 
For each possible string $s \in \mathcal{O}(o_2)$, 
its reachable sets are
    \[
    \text{Reach}_{0}(s)=\text{Reach}_{2}(s)  = \{5\}, \text{Reach}_{1}(s) = \{7\},
    \]
and the prediction set for $o_2$ is singleton $\Xi^S(o_2) = \{(\texttt{N}, \texttt{Y}, \texttt{N})\}$. 
Therefore,   \(o_2 \not\models (X_C, \Phi)\), which means that 
$S/G\not\models (X_C, \Phi)$ when the supervisor $S$ disables nothing.  
\end{myexm}

Our objective is to synthesize a partial observation supervisor such that the closed-loop system satisfies some given prediction-based properties. 
\begin{myprob}[\bf Supervisor Synthesis for Prediction-Based Properties]\label{problem}\upshape
Given system $G$ with controllable events $\Sigma_c$, observable events $\Sigma_o$ and 
prediction-based property $(X_C,\Phi)$, 
synthesize a partial-observation supervisor 
$S:\Sigma_o^*\to \Gamma$
such that $\mathcal{L}(S/G)$ is live and 
$S/G\models (X_C,\Phi)$. 
\end{myprob}

\begin{myexm}
We continue to consider the running example and assume that $\Sigma_c=\{a,b,c\}$ is the set of controllable events.  
To enforce prediction-based property $(X_C,\Phi)$ as specified in the previous example, 
there are two incomparable solutions. 
One possibility is to use a supervisor $S_1$ that disables event $a$ at the very beginning, 
whose closed-loop behavior $S_1/G$ is shown in Figure~\ref{fig:G1}. 
To see this, one can easily check that in $S_1/G$, we have  $\Xi^{S_1}(\epsilon)=\{(\texttt{N},\texttt{N},\texttt{N})\}$ and $\Xi^{S_1}(o_1^n)=\{(\texttt{N},\texttt{Y},\texttt{N}),(\texttt{N},\texttt{N},\texttt{N})\}$ for all $n\in\mathbb{N}^+$. 
Another solution is to use a supervisor $S_2$ that disables events $b$ and $c$ at the very beginning, 
whose closed-loop behavior $S_2/G$ is shown in Figure~\ref{fig:G1}. 
For this system, since it will not even reach critical state $7$, it clearly satisfies $(X_C,\Phi)$.
\end{myexm}

\subsection{Applications of Prediction-Based Properties}\label{sec:3.2}
The definition of prediction-based properties above is intentionally generic. 
In practice, the critical set \( X_C \) and evaluation function \( \Phi \) can be instantiated to define specific prediction-based properties tailored to different applications.
Below, we introduce two concrete examples of such properties, which will also help illustrate the previously introduced notations.

\subsubsection{Pre-Opacity for Intention Security} 
First, we consider the notion of pre-opacity proposed by \cite{yang2022secure}, which ensures that a system always maintains plausible deniability regarding the intention to reach certain secret states in the future.  
Specifically, we assume the presence of an intruder modeled as a passive observer with access to the observable event set \(\Sigma_o\). At any given time, the intruder can predict the system’s reachable states within a finite horizon of \(M\) steps. The security requirement is that whenever the system enters a secret state, the intruder must never be able to definitively predict this visit \(K\)-steps in advance within its prediction horizon. This requirement can be captured in terms of prediction-based property as follows.

\begin{mydef}[\bf ($M,K$)-Pre-Opacity]\upshape\label{def:pre-opa}
Given system $G$ with observable events $\Sigma_o$, supervisor $S$, 
a set of secret states $X_S\subset X$, 
and two non-negative integers $K, M \in \mathbb{N}, K\leq M$, 
we say the closed-loop system $S/G$ is ($M,K$)-pre-opaque if  
$S/G\models (X_S, \Phi_{opa})$, where 
for each $\Xi^S\in 2^\V$, we have
\begin{equation}
\Phi_{opa}(\Xi^S)=1
\quad
\Leftrightarrow
\quad
\bigwedge_{k=K,\dots,M} (\bigvee_{\xi^S\in \Xi^S}  \xi^S[k]\neq\texttt{Y} )\vspace{-12pt}
\end{equation}\hfill$\blacksquare$
\end{mydef}

Intuitively, if \( S/G \not\models (X_S, \Phi_{opa}) \), then there exists an observation \(\alpha \in P(\mathcal{L}(S/G))\) and a future instant \( k \in \{K, \dots, M\} \) such that for every possible string \( s \in \mathcal{O}(\alpha) \), the system is guaranteed to visit a secret state in \( X_S \) at time \( k \). Consequently, the system's intention to reach a secret state at a specific future time can be precisely inferred more than \( K \) steps in advance.
Therefore, the evaluation function $\Phi$ in our running example is essentially $(2,0)$-pre-opacity by considering $X_S$ as the critical states set $X_C=\{7\}$.

\subsubsection{Predictability for Fault Prognosis} 
The above notion of pre-opacity ensures that an intruder \emph{cannot} predict the system's intention to reach secret states. However, in certain applications such as fault prognosis, the system must instead guarantee that its execution of critical behaviors (e.g., faults) \emph{can be} predicted in advance. This requirement, known as predictability \cite{genc2009predictability}, has been extensively studied in the literature.

To formalize this, we assume that the state space of the system is partitioned as  
\[  
X = X_N \dot{\cup} X_F,  
\]  
where \( X_N \) represents the set of normal states and \( X_F \) denotes the set of fault states, a fault is a transition from a state in \(X_N\) to a state in \(X_F\). Additionally, we assume that faults are permanent in the sense that once the system enters a fault state, it remains in a fault state indefinitely, i.e.,  
\[  
\forall x \in X_F, \forall s \in \mathcal{L}(G, x): \delta(x, s) \in X_F.  
\]  
To quantify the performance of a predictor, two performance bounds are considered in the literature \cite{yin2016decentralized}: 
\begin{itemize}
    \item 
    No missed alarm: any visit to fault states can be predicted $K$ steps ahead; 
    \item 
    No false alarm: once a fault alarm is issued, the system will visit fault states for sure within $M$ steps. 
\end{itemize}
Such a requirement can also be captured in terms of prediction-based property as follows. 

\begin{mydef}[\bf ($M,K$)-Predictability]\upshape\label{def:pre-predi}
Given system $G$ with observable events $\Sigma_o$, supervisor $S$, 
a set of fault states $X_F\subset X$, 
and two non-negative integers $K, M \in \mathbb{N}, K\leq M$, 
we say the closed-loop system $S/G$ is ($M,K$)-predictable if  
$S/G\models (X_F, \Phi_{pre})$, where 
for each $\Xi^S\in 2^\V$, we have
\begin{equation}
\Phi_{pre}(\Xi^S)=1
\ 
\Leftrightarrow \
( \!\!\!\!\bigwedge_{\xi^S\in \Xi^S} \!\! \!\!\xi^S[K]=\texttt{N} )
\vee
(\!\!\!\!\bigwedge_{\xi^S \in \Xi^S} \!\! \!\!\xi^S[M]=\texttt{Y}).\vspace{-12pt}
\end{equation}\hfill$\blacksquare$
\end{mydef}

To clarify the definition further, suppose the system fails to be ($M,K$)-predictable. 
Then there must exist two distinct prediction vectors $\xi^S, \xi'^S \in \Xi^S$, 
where $\xi^S[k] \neq \texttt{N}$ holds for some $k \leq K$, indicating possible fault occurrence within $K$ steps, 
while $\xi'^S[M] \neq \texttt{Y}$ shows the fault is not guaranteed within $M$ steps. 
This creates an inherent conflict: though faults may appear imminent in the near-term ($K$-step) prediction, 
their inevitability cannot be confirmed within the longer $M$-step window. 
As a result, no predictor can simultaneously avoid both missed alarms and false alarms when processing such observations, 
since the short-term possibility for fault behaviors ($\xi^S$) contradicts the long-term uncertainty for normal behaviors ($\xi'^S$).

Conversely, if the system is indeed ($M,K$)-predictable, then for any string $s \in \mathcal{L}(S/G)$ where $\delta(s t) \in X_F$ holds for some $|t| = K$ (indicating a potential fault within $K$ steps), 
let $\Xi^S$ be the prediction set for the observation $P(s)$. 
In this case, we know $\xi^S(s)[K] \neq \texttt{N}$ must hold for all $\xi^S(s) \in \Xi^S$. 
This implies $\bigwedge_{\xi^S \in \Xi^S(P(s))} \xi^S[M] = \texttt{Y}$, meaning the fault is guaranteed to occur within $M$ steps.
Consequently, the system allows reliable prediction of faults at least $K$ steps in advance, with the assurance that any predicted fault will inevitably occur within the subsequent $M$-step horizon.

\subsection{Prediction-Based Properties with Multiple Regions}
While the above definition of prediction-based properties is formulated for a single critical state set $X_C$, many practical applications require evaluating prediction correctness across multiple critical regions $X_1, X_2, \dots, X_m \subseteq X$. Our framework can be naturally extended to this more general setting through straightforward modifications to accommodate multiple critical sets.

Formally, in this setting,  for each string $s\in \mathcal{L}(S/G)$, 
the multiple prediction vector is a tuple
\begin{equation}
    \vec{\xi}^S(s)=(\xi^S_1(s),\xi^S_2(s),\dots, \xi^S_m(s))\in \V^m, 
\end{equation}
where each $\xi^S_i(s)$ is the previous defined prediction vector w.r.t. critical region $X_i$. 
For each $\alpha\in P(\mathcal{L}(S/G))$, the multiple prediction set is 
\begin{equation}
    \vec{\Xi}^S(\alpha)=\{ \vec{\xi}^S(s)\mid s\in\O(\alpha) \}\in 2^{\V^m}.
\end{equation}
Then the multiple prediction-based property is a tuple 
$(\{X_i\}_{i=1}^m,\Phi)$, 
where the evaluation function is extended to
\begin{equation}
    \Phi: 2^{\V^m}\to \{0,1\}.
\end{equation}
We use the following notion, also motivated by security considerations, 
to illustrate the extension to multiple regions. 

\begin{mydef}[\bf ($M,K,m$)-Anonymity]\upshape\label{def:pre-ano}
Given system $G$ with observable events $\Sigma_o$, supervisor $S$, 
$m$ disjoint critical regions $X_{1},\dots, X_{m}\subset X$, 
non-negative integers $K, M \in \mathbb{N}, K\leq M$,  
we say the closed-loop system $S/G$ is ($M,K,m$)-anonymous if  
$S/G\models (\{X_{i}\}_{i=,1\dots,m}, \Phi_{ano})$, where 
$ \Phi_{ano}$ is defined by: 
for each $\vec{\Xi}^S=\{\vec{\xi}^S_{(1)},\dots, \vec{\xi}^S_{(q)}\} \in 2^{\V^m}$, we have
$\Phi_{ano}(\vec{\Xi}^S)=1$, if and only if, 
for each instant $k=K, \dots,M$,  
whenever $\vec{\xi}^{S}_{(j),i}[k]\neq\texttt{N}$ for some $j\leq q, i\leq m$,  
there must exist a set of indices $(j_1,1),\dots, (j_{i-1},i-1),(j_{i+1},i+1),\dots, (j_m,m)$ 
such that 
(i) \(\vec{\xi}^S_{(j_u)}\in \vec{\Xi}^S,\forall u=1,\dots, i-1,i+1,\dots, m\); and
(ii)
    $\vec{\xi}^S_{(j_u),u}[k]\neq\texttt{N},\forall u=1,\dots, i-1,i+1,\dots, m$. \hfill$\blacksquare$
\end{mydef}

This definition also relates to security scenarios involving opacity, where an intruder can predict system behavior for at most \( M \) steps. The key requirement is that the system must maintain ambiguity about visiting any critical region for at least \( K \) steps up to \( M \) steps. Specifically, whenever the system might visit a critical region at a future instant beyond \( K \) steps, it must preserve plausible deniability by showing potential visits to \( m-1 \) other critical regions at the same time instant.  
This concept aligns with existing notions of  anonymity in the literature \cite{sweeney2002k}. 
However, while prior work focuses on current-state ambiguity, our framework extends this principle to future behavior uncertainty.

Note that this definition cannot be captured by a single prediction vector, as we must track membership across all critical regions. 
However, such an extension mainly expands the coding space for reachable sets while preserving the underlying framework. 
Consequently, all techniques developed for the single-region case can be readily adapted to multiple regions. 
For simplicity, hereafter in this work. We will focus solely on the single-region case.

\section{Information Structure with Previewed Predictions}\label{sec: inform-state}
In this section, we first discuss the fundamental challenges in applying conventional partial-observation supervisor synthesis techniques to prediction-based properties. We then present a new information structure that addresses these limitations through a novel information preview mechanism.
\subsection{Challenges in Prediction-Based Properties Synthesis}

Recall that a partial-observation supervisor $S:\Sigma_o^*\to \Gamma$ works as follows:
\begin{itemize}
    \item 
    Initially, the supervisor makes a control decision $\gamma_0\in \Gamma$ from the initial state $x_0$, 
    and   the system evolves unobservably through events in $\gamma_0\cap \Sigma_{uo}$ and reaches possible states 
    $\hat{q}_0=\UR_{\gamma_0}(q_0)$, where $q_0=\{x_0\}$; 
    \item 
    The supervisor then observes an event $\sigma_1\in \gamma_0\cap \Sigma_o$
    and updates its state estimate (without unobservable tails) to ${q}_1=\OR_{\sigma_1}(\hat{q}_0)$; 
    \item 
    Then the supervisor updates its control decision to $\gamma_1\in \Gamma$, 
    updates the unobservable reach   to $\hat{q}_1=\UR_{\gamma_1}(q_1)$, 
    and waits for the next observable event  $\sigma_2\in \gamma_1\cap \Sigma_o$; 
    \item 
    The above recursive procedure is repeated indefinitely, 
    which induces an information-flow
    \begin{equation}\label{eq:is-current}
     q_0\xrightarrow{\gamma_0}\hat{q}_0 \xrightarrow{\sigma_1}q_1\xrightarrow{\gamma_1}\cdots\xrightarrow{\sigma_n}{q}_n \xrightarrow{\gamma_n}\hat{q}_n,  
    \end{equation}
    where  $\hat{q}_i=\UR_{\gamma_i}(q_i)$ and ${q}_{i}=\OR_{\sigma_i}(\hat{q}_{i-1})$. 
\end{itemize}

The recursive process described above is often referred to as the weak version of \emph{separation principle} between control and observations \cite{barrett2000separation,kumar2015stochastic}. This principle states that the current-state estimate of the closed-loop system depends solely on the actual execution history \(\gamma_0\sigma_1\gamma_1...\sigma_n\gamma_n\), remaining independent of the future control policy \( S \). 
Consequently, the power set \( 2^X \) can serve as the set of \emph{information states} 
and  various system properties, such as safety, opacity, and distinguishability, can be evaluated based solely on these information states \( q_i \) or \( \hat{q}_i \) \cite{yin2016uniform}. 
Therefore, for supervisor synthesis for such (current) information-state-based properties, it suffices to search through the information-state space while avoiding states that violate the desired property.

However, this standard approach for partial-observation supervisor synthesis fails in our setting  as the separation principle no longer holds. 
The fundamental issue comes from the inherent dependency between current predictions and the control policy. 
Specifically, to check whether a prediction-based property holds upon an observation, we must compute the reachable set from each possible current state. 
This computation is straightforward for verification problems, where the system dynamics \( G \) are fixed. 
However, it is  problematic for control synthesis. 
The main challenge arises because \textbf{future behavior of the system depends on future control decisions which have not yet been synthesized}.   
This \emph{future-dependency} issue creates a fundamental challenge distinct from existing control synthesis problems, as the predictions being verified depend on control actions that are themselves part of the synthesis objective.
 
\subsection{Preview of Prediction Vectors}
To address the above discussed challenge,  our approach is to augment the information-state space by estimating 
the set of states augmented with some previewed future information rather than the original states. 
Formally, an \textbf{augmented state} is a tuple 
\begin{equation}
(x,\mathbf{v})=(x,\mathbf{v}[0],\mathbf{v}[1], \dots, \mathbf{v}[H]) \in X\times \V, 
\end{equation}
which is a system state augmented with a prediction vector. 
Then we choose the estimates of augmented states as information states to solve our problem. 
\begin{mydef}[\bf Information States]\upshape 
An information state $\imath\in 2^{X\times \V}$ is a set of augmented states  such that 
\[
\forall (x,\mathbf{v}),(x',\mathbf{v}')\in \imath:x=x'\Rightarrow \mathbf{v}=\mathbf{v}'.
\]  
We denote by $\mathbb{I}\subseteq 2^{X\times \V}$ the set of all possible information states satisfying the above condition.  \hfill$\blacksquare$
\end{mydef}

To explain the above definition in more detail, for each information state $\imath\in \mathbb{I}$, 
we define 
\begin{align}
    \state(\imath) =&\{x\in X\mid \exists \mathbf{v}\in \V\text{ s.t. }(x,\mathbf{v})\in \imath\}\\
    \vecc(\imath) =&\{\mathbf{v}\in \V \mid \exists x\in X \text{ s.t. }(x,\mathbf{v})\in \imath\}
\end{align}
as its system state component and its prediction vector component, respectively. 
Intuitively, in each information state, a system state can be augmented with at most one prediction vector. 
This requirement comes from the fact that if the supervisor makes control decisions based on the information state, as will be elaborated later, then any two identical plant states within the same information state must exhibit the same future behavior, regardless of the strings leading to the states. 
Moreover, we will later prove that this restriction does not lose generality for the purpose of control synthesis.
Hence, for each $x\in \state(\imath)$, we denote by $\mathbf{v}_{\imath}^x$ the unique prediction vector augmented with $x$ in $\imath$, 
i.e., $(x,\mathbf{v}_{\imath}^x)\in \imath$. 

Our objective here is to use an information state  $\imath$ to summarize all relevant state information immediately after observing a new event, such as the role of $q_i$ in Eq.~\eqref{eq:is-current}. However, since future control decisions remain undetermined, we cannot precisely ascertain the membership status of each state at future instants based solely on past observations and control decisions. Thus, the augmented prediction vector associated with each state in the information state serves as a \textbf{preview  of future information} by effectively ``borrowing" membership status from future instants. 
For this preview to be meaningful, it must remain consistent with the actual membership facts that will later materialize. While this requires multi-step information consistency, we can enforce it through a one-step consistency condition that applies globally.  
Now, we formalize this idea as  follows. 

\begin{mydef}[\bf Single-Observation Information Patterns]\upshape\label{def:sig-obs-inf-patt}
Let $\imath\in \mathbb{I}$ be an information state and $\gamma\in \Gamma$ be a control decision applied currently.  A \emph{single-observation information pattern} of  $\imath$  under $\gamma$ is a tuple of form
\[
\mathcal{I}=(\imath_u,\{\imath_\sigma\}_{\sigma\in \Sigma_o\cap \gamma}) \in 
\mathbb{I}\times\underbrace{\mathbb{I}\times\cdots \times\mathbb{I}}_{|\Sigma_o\cap \gamma|\text{ times}} 
\]  
such that 
(i)  $\imath\subseteq \imath_u$; and 
(ii) $\state( \imath_u) = \UR_\gamma(  \state(\imath)  )$; and 
(ii)  $ \forall \sigma \in  \Sigma_o\cap \gamma:\state(\imath_\sigma)= \OR_\sigma(  \state( \imath_u  )  )$.\hfill$\blacksquare$
\end{mydef}

Intuitively, a single-observation information pattern $\mathcal{I}$ consists of two types of information states:  
\begin{itemize}
    \item 
    $\imath_u$ represents the information state reached unobservably from $\imath$ under control decision $\gamma$, 
    and therefore, $\imath$ needs to be included in $\imath_u$; and 
    \item 
    $\imath_\sigma$ represents the information state reached immediately from $\imath_u$ after observing event $\sigma$.  
\end{itemize}
Note that for a given information state $\imath$ and control decision $\gamma$, the single-observation information pattern is  not unique in general. This is because different prediction vectors can be assigned to the system states in $\state(\imath_u)$ and $\state(\imath_\sigma)$. However, not all such assignments are meaningful. As previously discussed, 
the previewed information (encoded as prediction vectors) must remain consistent with the actual future behavior.  
This is formalized as follows. 

\begin{mydef}[\bf One-Step Reachable Sets]\upshape \label{def:one-step-rea-set}
Let $\mathcal{I}=(\imath_u,\{\imath_\sigma\}_{\sigma\in \Sigma_o\cap \gamma})$ be 
a single-observation information pattern of  $\imath$  under $\gamma$. 
For augmented state $\tilde{x}=(x, \mathbf{v})\in \imath_u$, 
its \emph{one-step reachable set} within $\mathcal{I}$ is defined by 
\begin{align}
\mathcal{R}_{\mathcal{I}}(\tilde{x})=&
\{(x',\mathbf{v'})\mid   \sigma\!\in\! \Sigma_{uo}\!\cap\! \gamma, x'\!=\!\delta(x,\sigma), \mathbf{v'}\!=\!\mathbf{v}^{x'}_{\imath_u}\} \nonumber\\
\cup&
\{(x',\mathbf{v'})\mid   \sigma\!\in\! \Sigma_{o}\!\cap\! \gamma, x'\!=\!\delta(x,\sigma), \mathbf{v'}\!=\!\mathbf{v}^{x'}_{\imath_\sigma}\}, 
\end{align}
which is the set of augmented states that can be reached from $\tilde{x}$ in one step 
either in $\imath_u$ through an unobservable event
or in $\imath_\sigma$ through an observable event $\sigma$. \hfill$\blacksquare$
\end{mydef}

To enforce information consistency on prediction vectors,  
for each augmented state in $\imath_u$, its committed membership status at time instant $k$ must align with the membership status of augmented states in its one-step reachable sets at time instant $k-1$.  
For instance, if it is asserted that critical states will certainly be reached in $k$ steps from the current state, then for every subsequent state reached in the next step,   it must also be asserted that critical states will certainly be reached in $k-1$ steps thereafter.  
This idea is formalized by the notion of information consistency, defined as follows.

\begin{mydef}[\bf Information Consistency]\upshape \label{def:inf-consis}
Let $\mathcal{I}=(\imath_u,\{\imath_\sigma\}_{\sigma\in \Sigma_o\cap \gamma})$ be 
a single-observation information pattern of $\imath$  under $\gamma$.   
We say augmented state $\tilde{x}=(x, \mathbf{v})\in \imath_u$ is \emph{consistent} (w.r.t.\ critical states $X_C$) if it satisfies the following conditions:
\begin{itemize}
    \item 
    For the current instant, we have
\begin{equation}\label{eq:curr-consis}
  \mathbf{v}[0]=  		
  \left\{
		\begin{array}{ll}
			 \texttt{Y}  & \text{if}\quad    x\in X_C  \\
			 \texttt{N}        & \text{if}\quad   x\not\in X_C
		\end{array}
		\right.  
\end{equation}
    \item 
    For each future instant $k=1,2,\dots, H$, we have
    \begin{equation}\label{eq:one-step-consis}
  \mathbf{v}[k]=  		
  \left\{
		\begin{array}{ll}
			 \texttt{Y}  & \text{if}\quad \forall (x',\mathbf{v'})\in \mathcal{R}_{\mathcal{I}}(\tilde{x}): \mathbf{v'}[k-1]=\texttt{Y}  \\
			 \texttt{N}  & \text{if}\quad \forall (x',\mathbf{v'})\in \mathcal{R}_{\mathcal{I}}(\tilde{x}): \mathbf{v'}[k-1]=\texttt{N}\\
			 \texttt{U}  & \text{otherwise} 
		\end{array}
		\right.  
\end{equation}
\end{itemize}
We say a single-observation information pattern $\mathcal{I}$ is consistent if each augmented state $(x, \mathbf{v})\in \imath_u$ in it is consistent. 
For  information state $\imath\in \mathbb{I}$ and control decision $\gamma\in \Gamma$, 
we  denote by $\mathbb{A}(\imath, \gamma)$ the set of all single-observation information patterns for $\imath$ under $\gamma$ that are consistent, 
and define $\mathbb{A}=\mathbb{A}_{\imath\in \mathbb{I},\gamma\in \Gamma}(\imath, \gamma)$. \hfill$\blacksquare$
\end{mydef} 

We illustrate the above concepts with some examples.  
We first demonstrate through the following example that arbitrary assignment of prediction vectors may lead to ill-defined information patterns, as the asserted future behavior could become unrealizable.

\begin{myexm}[\bf Inconsistent Information Patterns]
We continue to consider the running example shown in Figure~\ref{fig:G} with \( H = 2 \).
Let us first consider a possible information state
$\imath=\{   (0,(\texttt{N},\texttt{N},\texttt{Y}))  \}$, 
which means that one knows for sure that the system is at the initial state, 
i.e., $\state(\imath)=\{0\}$, 
and for state $0$, one asserts that the system is not currently in $X_C$ and will reach $X_C$ for sure in two steps. 
Let us consider control decision $\gamma=\Sigma\setminus\{a\}$, i.e., the supervisor only disables event $a$. 
Then there is no  consistent single-observation information pattern for $\imath$ under $\gamma$. 
To see this, suppose that 
$(\imath_u,\{\imath_\sigma\}_{\sigma\in \Sigma_o\cap \gamma})$ is a consistent  single-observation information patterns. 
According to Def.~\ref{def:sig-obs-inf-patt}, 
we know that 
$\state(\imath_u)=\UR_\gamma(\{0\})=\{0,3,4\}$ 
and 
$\imath_u$ is in the form of 
$\imath_u=\{ (0,(\texttt{N},\texttt{N},\texttt{Y})), (3,\mathbf{v}), (4, \mathbf{v}')   \}$. 
Note that,   within $\imath_u$, states $3$ and $4$ are reached from state $0$ 
in one step and two steps, respectively. 
Therefore, to ensure information consistency, according to Eq.~\eqref{eq:curr-consis},
we have 
\[
\mathbf{v}'[0]
=\mathbf{v}[1]
=(\texttt{N},\texttt{N},\texttt{Y})[2]
= \texttt{Y}.
\]
However, state $4\notin X_C$ is not a critical state. 
According to Eq.~\eqref{eq:one-step-consis}, 
we have $\mathbf{v}'[0]=\texttt{N}$, which is a contradiction. 
Therefore, no such information pattern exists and 
$\mathbb{A}(\{(0,(\texttt{N},\texttt{N},\texttt{Y}))\},\Sigma)=\emptyset$. 
Essentially, this means that 
associating prediction vector $(\texttt{N},\texttt{N},\texttt{Y})$ to the initial state $0$ is meaningless as it cannot be realized in the future.
\end{myexm}

Next, we provide an example of a correct single-observation information pattern that is consistent. 
\begin{myexm}[\bf Consistent Information Patterns]
We still consider the running example shown in Figure~\ref{fig:G} with \( H = 2 \) and $X_C=\{7\}$.
Now, let us first consider a possible information state
$\imath=\{   
(5,(\texttt{N},\texttt{Y},\texttt{N})),
(6,(\texttt{N},\texttt{N},\texttt{N}))
\}$.
This information state could be reached by 
first disabling event $a$ and then observing event $o_1$. 
Suppose the control decision is $\gamma=\Sigma$, which enables all events. 
Then the unique single-observation information pattern consistent with 
$\imath$ under $\gamma$ is 
$\mathcal{I}=(\imath_u,\{\imath_{o_1}\})$, where 
\begin{align}
   \imath_u=&\{(5,(\texttt{N},\texttt{Y},\texttt{N})),
(6,(\texttt{N},\texttt{N},\texttt{N})),
(7,(\texttt{Y},\texttt{N},\texttt{Y}))\} \nonumber\\
\imath_{o_1}=&\{   
(5,(\texttt{N},\texttt{Y},\texttt{N})),
(6,(\texttt{N},\texttt{N},\texttt{N}))
\}.\nonumber
\end{align}
To see the information consistency, 
let us consider $\tilde{x}=(5,(\texttt{N},\texttt{Y},\texttt{N}))\in \imath_u$, and we have
$\mathcal{R}_{\mathcal{I}}(\tilde{x})=\{  (7,(\texttt{Y},\texttt{N},\texttt{Y}))  \}$
as state $7$ can be reached by state $5$ via unobservable event $e$. 
Clearly, it is consistent because
\begin{itemize}
    \item 
    $\mathbf{v}^5[0]\!=\!\texttt{N}$, i.e., the current state is consistent with its current status $5\!\notin\! X_C$;
    \item 
    $\mathbf{v}^5[1]\!=\!\texttt{Y}$ and for the  unique one-step reachable state,  we have $\mathbf{v}^7[0]=\texttt{Y}$, 
    i.e., the one-step prediction is consistent with the actual status in the next step. The same reason for its two-step prediction.
\end{itemize}
Similarly, for the augmented state 
$\tilde{x}'=(6,(\texttt{N},\texttt{N},\texttt{N}))\in \imath_u$, 
we have
$\mathcal{R}_{\mathcal{I}}(\tilde{x}')=\{ \tilde{x}'\}$, 
as it can only reach itself by the observable self-loop $o_1$. 
We can check that this augmented state is also consistent.
\end{myexm}

\section{Supervisor Realizations and  Control Synthesis Algorithms}\label{sec:syhthesis}
In this section, we show how to solve the prediction-based supervisor synthesis problem. 
First, we provide the information-state-based (IS-based) control structure 
and discuss how it can represent a finite-state realizable solution to the problem. 
Then we provide an approach that can effectively search for a valid such structure. 
Finally, we show that the proposed synthesis algorithm is both sound and complete. 
\subsection{IS-Based Control Structures}

Given an information state $\imath$, the supervisor may have multiple control decisions $\gamma \in \Gamma$ at the current instant. Furthermore, under a chosen $\gamma$, there may also exist multiple consistent single-observation information patterns $\mathcal{I}$, i.e., $|\mathbb{A}(\imath, \gamma)| > 1$, since the future evolution of the system remains unresolved at this stage. 
However, when the supervisor’s functionality is fully specified, these choices of $\gamma$ and $\mathcal{I}$ must be uniquely determined, as the future behavior of the closed-loop system is already fixed a priori by the supervisor. 
To formalize this, we introduce the following control structure to capture the unique information-state evolution under a given supervisor.

\begin{mydef}[\bf Control Structures]\upshape\label{def:con-struc}
An information-state-based (IS-based) control structure is a tuple
\begin{equation}
\S=( \mathbb{I}^\S,\mathbb{A}^\S,f^\S_{\mathbb{I},\mathbb{A}},f^\S_{\mathbb{A},\mathbb{I}},\imath_0^\S ),
\end{equation}
where
\begin{itemize}
    \item 
    $\mathbb{I}^\S\subseteq \mathbb{I}$ is a set of information states, which are also referred to as the \emph{decision-states}; 
    \item 
    $\mathbb{A}^\S\subseteq \mathbb{A}$ is a set of consistent single-observation information patterns, which are also referred to as the \emph{observation-states};
    \item 
    $f^\S_{\mathbb{I},\mathbb{A}}: \mathbb{I}^\S\times \Gamma\to\mathbb{A}^\S$  
    is the \emph{deterministic} transition function from decision states to observation states such that, 
    for each $\imath\in \mathbb{I}^\S$, we have
    \begin{itemize}
        \item 
        there exists a unique $\gamma\in \Gamma$ such that $f^\S_{\mathbb{I},\mathbb{A}}(\imath,\gamma)!$; and 
        \item 
        for such  unique $\gamma\in \Gamma$, 
        we have  $f^\S_{\mathbb{I},\mathbb{A}}(\imath,\gamma)\in \mathbb{A}(\imath,\gamma)$. 
    \end{itemize} 
    \item 
    $f^\S_{\mathbb{A},\mathbb{I}}: \mathbb{A}^\S\times \Sigma_o \to\mathbb{I}^\S$  
    is the \emph{deterministic} transition function from observation states to decision states such that, 
    for each $\mathcal{I}=(\imath_u,\{\imath_\sigma\}_{\sigma\in \Sigma_o\cap \gamma})\in \mathbb{A}$, 
    the following transitions are defined
    \begin{equation}
        f^\S_{\mathbb{A},\mathbb{I}}(\mathcal{I},\sigma)= \imath_\sigma,\forall \sigma\in \Sigma_o\cap \gamma
    \end{equation}
    \item 
    $\imath_0^\S=\{ (x_0,\mathbf{v})  \}\in \mathbb{I}^\S$ is the initial state, which is a decision-state
    with a single augmented state such that $x_0$ is the initial state of the system and $\mathbf{v}$ is an arbitrary prediction vector. \hfill$\blacksquare$
\end{itemize}    
\end{mydef}

Essentially, our purpose is to use an IS-based control structure $\S$ as a \emph{finite realization} of a partial-observation supervisor. 
Particularly, at each decision state, since the control decision defined is unique, one can decode this unique transition defined  as the control decision that applies to the system. 
Then we move to track the  successor observation state following the deterministic transition function $f^\S_{\mathbb{I},\mathbb{A}}$. 
At each observation state, all possible observations $\sigma\in \Sigma_o\cap \gamma$ are defined, 
and upon each observation, we move to track the next decision state, from which one can further decode the new control decision, following the deterministic  transition function $f^\S_{\mathbb{A},\mathbb{I}}$. 
Therefore, for each observation sequence
$\alpha=\sigma_1\sigma_2\dots \sigma_n\in \Sigma_o^*$, 
it induces a unique path 
\begin{equation}\label{eq:flow}
    \imath_0\xrightarrow{\gamma_0}\mathcal{I}_0\xrightarrow{\sigma_1}\imath_1\xrightarrow{\gamma_1}\cdots
   \xrightarrow{\gamma_{n-1}}\mathcal{I}_{n-1}\xrightarrow{\sigma_n}\imath_n \xrightarrow{\gamma_{n}}\mathcal{I}_{n},
\end{equation}
where $\gamma_i$ is the unique control decision defined at decision-state $\imath_i$.  
We denote by 
$\mathbb{I}^\S(\alpha)=\imath_n$ and  
$\mathbb{A}^\S(\alpha)=\mathcal{I}_n$ 
the decision-state and the observation-state reached upon $\alpha$ in $\S$, respectively. Based on the above process, we can decode a supervisor from the IS-based control structure as follows.

\begin{mydef}[\bf Induced Supervisors]\upshape
Given an IS-based control structure  $\S$,  
its induced supervisor $S:\Sigma_o^*\to \Gamma$ is defined by: 
\begin{equation}
\forall \alpha=\sigma_1\sigma_2\dots \sigma_n\in \Sigma_o^*: 
S(\alpha)= \gamma_{n}, 
\end{equation}
where $\gamma_n$ is the unique control decision at $\mathbb{I}^\S(\alpha)$ as defined in Eq.~\eqref{eq:flow}.\hfill$\blacksquare$
\end{mydef}

We illustrate the  notions of the control structure and the induced supervisor with the following example.

\begin{myexm}[\bf Control Structure and Its Induced Supervisor]

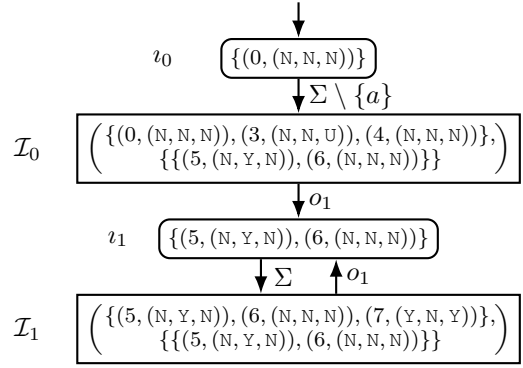
\begin{figure}
    \centering
        \begin{tikzpicture}[->,>={Latex}, thick, initial text={}, node distance= 1.2 cm, initial where=above, thick, Y node/.style={rectangle, rounded corners, draw, minimum size=4mm, font=\footnotesize}, Z node/.style={rectangle, draw, minimum size=4mm, font=\footnotesize}]    
   \node[initial, state, Y node] (0) { $ \{ (0,(\texttt{N}, \texttt{N}, \texttt{N})) \} $ } ;  
   \node[state, Z node, ] [below of = 0] (1) { $\left(\!\!\!\!\!\!\!\!\!\!\begin{array}{cc}
        & \{(0,(\texttt{N}, \texttt{N}, \texttt{N})),(3,(\texttt{N}, \texttt{N}, \texttt{U})), (4,(\texttt{N}, \texttt{N}, \texttt{N}))\}, \\
        &\{\{(5,(\texttt{N}, \texttt{Y}, \texttt{N})),(6,(\texttt{N}, \texttt{N}, \texttt{N}))\}\}
   \end{array}\!\!\!\!\right) $ } ;   
   \node[state, Y node][below of = 1] (2) {$ \{ (5,(\texttt{N}, \texttt{Y}, \texttt{N})), (6,(\texttt{N}, \texttt{N}, \texttt{N})) \} $};
   \node[state, Z node][below of = 2] (3) {$\left(\!\!\!\!\!\!\!\!\!\!\begin{array}{cc}
        & \{(5,(\texttt{N}, \texttt{Y}, \texttt{N})),(6,(\texttt{N}, \texttt{N}, \texttt{N})), (7,(\texttt{Y}, \texttt{N}, \texttt{Y}))\},\\
        &\{\{(5,(\texttt{N}, \texttt{Y}, \texttt{N})),(6,(\texttt{N}, \texttt{N}, \texttt{N}))\}\}
   \end{array}\!\!\!\!\right) $};

   \node[xshift=-1.8cm,]{$\imath_0$};
   \node[xshift=-3.6cm, yshift=-1.2cm]{$\mathcal{I}_0$};
   \node[xshift=-2.4cm, yshift=-2.4cm]{$\imath_1$};
   \node[xshift=-3.6cm, yshift=-3.6cm]{$\mathcal{I}_1$};

   \path[->]
   (0) edge node [xshift=0.7cm] {$\Sigma\setminus \{a\}$} (1)
   (1) edge node [xshift=0.3cm] {$o_1$} (2)
   (2) edge [transform canvas={xshift=-0.5cm}] node [xshift=0.3cm] {$\Sigma$} (3)
   (3) edge [transform canvas={xshift=0.5cm}] node [xshift=0.3cm] {$o_1$} (2);
   \end{tikzpicture}
    \caption{Control Structure of \(S_1\) in Figure~\ref{fig:G1}.}
    \label{fig:controlstructure}
\end{figure}

We still consider the running example shown in Figure~\ref{fig:G} with \(H=2\) and \(X_C=\{7\}\). 
An example of  IS-based control structure is shown in Figure~\ref{fig:controlstructure}. 
In this structure, each rectangle with rounded corners represents a decision-state $\imath \in \mathbb{I}^\S$, from which a unique control decision $\gamma$ is selected; 
each plain rectangle represents an observation-state $\mathcal{I} \in \mathbb{A}^\S$, showing a consistent single-observation information pattern conditioned on the selected control decision. 

We start from the initial decision-state $\imath_0 = \{(0,(\texttt{N},\texttt{N},\texttt{N}))\}$, where the control decision is $\gamma_0 = \Sigma \setminus \{a\}$, i.e., the event \(a\) is disabled. Under this decision, the system evolves to the observation-state $\mathcal{I}_0$, which is a consistent single-observation information pattern of \(\imath_0\) under \(\gamma_0\). 
Then upon observation \(o_1\), the structure transitions to decision-state $\imath_1 = \{(5,(\texttt{N}, \texttt{Y}, \texttt{N})), (6,(\texttt{N}, \texttt{N}, \texttt{N}))\}$, at which the unique control decision $\gamma_1 = \Sigma$ (all events enabled) is applied, leading deterministically to the next observation-state $\mathcal{I}_1$. Finally, the structure loops via observation \(o_1\) and transitions back to $\imath_1$.
This control structure induces   a unique supervisor \(S_1\) that disables event \(a\) at the beginning and enables all events afterward, whose controlled behavior is already as shown in Figure~\ref{fig:G1}.
\end{myexm}

Note that in the IS-based control structure, each decision-state consists of a set of system states augmented with prediction vectors. 
In fact, if the supervisor always selects control decisions according to the control structure (as is the case for the induced supervisor), 
then for the  decision-state reached in the IS-based control structure $\S$, we have:
\begin{itemize}
\item[(i)] 
Its system state component  corresponds to the current state estimate (excluding unobservable tails) under the induced supervisor $S$; and
\item[(ii)] 
Its  prediction vector component matches the prediction set of the induced supervisor $S$.
\end{itemize}

For example, let us consider the decision-state
$\imath_1 = \{(5,(\texttt{N}, \texttt{Y}, \texttt{N})), (6,(\texttt{N}, \texttt{N}, \texttt{N}))\}$ as shown in Figure~\ref{fig:controlstructure}.
This decision-state is reached after observations $\alpha = o^*_1$, and we have \(\mathcal{O}(\alpha)=\L(S_1/G)\cap\Sigma^*\{o_1\}\). Then, the state component \(\mathsf{state}(\imath_1)=\{5,6\}=\{\delta(s)\in X: s\in \O(\alpha)\}\}\) is exactly the current state estimate (excluding unobservable tails) based on $\alpha$ and the control decisions made by \(S_1\) along the path; and \(\mathsf{vec}(\imath_1)=\{(\texttt{N}, \texttt{Y}, \texttt{N}),(\texttt{N}, \texttt{N}, \texttt{N})\}=\{\xi^S(s)\in\mathbb{V}:s\in \O(\alpha)\}\) is exactly the prediction set for \(\alpha\) under \(S_1\).
This relationship between the control structure and its induced supervisor is formally established as follows. 

\begin{mypro}\label{prop:IS}\upshape
Let $\S$ be an IS-based control structure and $S:\Sigma_o^*\to \Gamma$ be its induced partial-observation supervisor.  
For each observation $\alpha\in P(\mathcal{L}(S/G))$,  we have
\begin{align}
    \state(\mathbb{I}^\S(\alpha))=&\ \{\delta(s)\in X: s\in \mathcal{O}(\alpha)\}, \\
    \mathsf{vec}(\mathbb{I}^\S(\alpha))=&\ \Xi^S(\alpha). 
\end{align} 
\end{mypro}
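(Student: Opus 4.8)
The plan is to establish the two identities in turn: first the state-component identity by induction on $|\alpha|$, and then the vector-component identity through a separate induction on the prediction-horizon index. The unifying idea is to read Eq.~\eqref{eq:flow} as a faithful simulation of the closed-loop estimation recursion: since the induced supervisor $S$ applies at each step precisely the unique control decision $\gamma_i$ decoded at the decision-state $\imath_i=\mathbb{I}^\S(\sigma_1\cdots\sigma_i)$, the information flow generated by $\S$ agrees step by step with the state-estimation dynamics of $S/G$.

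For the state component I would induct on $n=|\alpha|$. The base case $\alpha=\epsilon$ is immediate because any string ending in an observable event has non-empty projection, whence $\O(\epsilon)=\{\epsilon\}$ and both sides equal $\{x_0\}$. For the step $\alpha=\beta\sigma$, set $\imath=\mathbb{I}^\S(\beta)$ and $\gamma=S(\beta)$, and let $\imath_u$ and $\{\imath_\tau\}_{\tau\in\Sigma_o\cap\gamma}$ be the components of $\mathbb{A}^\S(\beta)$, so that $\mathbb{I}^\S(\beta\sigma)=\imath_\sigma$. Combining the hypothesis $\state(\imath)=\{\delta(s):s\in\O(\beta)\}$ with the defining identities $\state(\imath_u)=\UR_\gamma(\state(\imath))$ and $\state(\imath_\sigma)=\OR_\sigma(\state(\imath_u))$ of Def.~\ref{def:sig-obs-inf-patt}, it then suffices to check the routine identity $\{\delta(s'):s'\in\O(\beta\sigma)\}=\OR_\sigma(\UR_\gamma(\{\delta(s):s\in\O(\beta)\}))$; this holds because, with $\gamma=S(\beta)$ fixed, every $s'\in\O(\beta\sigma)$ factors as $s'=sw\sigma$ with $s\in\O(\beta)$ and $w\in(\Sigma_{uo}\cap\gamma)^*$, and conversely each such factorization yields an element of $\O(\beta\sigma)$.

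For the vector component, the crux is the lemma: for every $s\in\mathcal{L}(S/G)$, the unique vector attached to $\delta(s)$ in the unobservable-reach component $\imath_u$ of $\mathbb{A}^\S(P(s))$ equals the true prediction vector $\xi^S(s)$. Granting this, since $\mathbb{I}^\S(\alpha)\subseteq\imath_u$ and vectors are unique per state, specializing to $s\in\O(\alpha)$ and invoking the state identity gives $\vecc(\mathbb{I}^\S(\alpha))=\{\xi^S(s):s\in\O(\alpha)\}=\Xi^S(\alpha)$. I would prove the lemma by induction on the coordinate $k$, simultaneously over all $s$. The base $k=0$ matches rule~\eqref{eq:curr-consis} against $\xi^S(s)[0]=\chi_C(\{\delta(s)\})$, both being $\texttt{Y}$ or $\texttt{N}$ according as $\delta(s)\in X_C$. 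For the step, each enabled event $\sigma\in\gamma$ contributes one element to $\mathcal{R}_{\mathcal{I}}((\delta(s),\mathbf{v}))$ and one extension $s\sigma\in\mathcal{L}(S/G)$: an unobservable $\sigma$ keeps the observation and draws its vector from $\imath_u$, while an observable $\sigma$ passes to the successor decision-state $\imath_\sigma=\mathbb{I}^\S(P(s)\sigma)$ and draws its vector from there. Applying the hypothesis at coordinate $k-1$ to each $s\sigma$ shows that the set of $\{\texttt{Y},\texttt{N},\texttt{U}\}$-values $\mathbf{v}'[k-1]$ appearing in $\mathcal{R}_{\mathcal{I}}$ is exactly $\{\xi^S(s\sigma)[k-1]\}$. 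Because $\text{Reach}_k(s)=\bigcup_{s\sigma\in\mathcal{L}(S/G)}\text{Reach}_{k-1}(s\sigma)$ and $\chi_C$ of a union is $\texttt{Y}$ (resp.\ $\texttt{N}$) exactly when every part is $\texttt{Y}$ (resp.\ $\texttt{N}$) and $\texttt{U}$ otherwise, this collection of values is fed into the same rule that Eq.~\eqref{eq:one-step-consis} uses to define $\mathbf{v}[k]$, yielding $\mathbf{v}[k]=\xi^S(s)[k]$.

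The hard part will be exactly this coordinate induction, because it formalizes the ``borrowing from the future'': the committed entry $\mathbf{v}[k]$ at the current decision-state is pinned down by vectors living in strictly later decision-states $\imath_\sigma$. This is precisely why the induction must run on $k$ rather than on observation length, as the dependency is downward in $k$ yet unbounded forward in observations. Two technical points deserve care. First, a single successor state may be reached both observably and unobservably and then legitimately carries two distinct vectors in $\mathcal{R}_{\mathcal{I}}$, so the correspondence must be set up with extension strings $s\sigma$ classified by observability rather than with bare states. Second, rule~\eqref{eq:one-step-consis} is unambiguous only when $\mathcal{R}_{\mathcal{I}}((\delta(s),\mathbf{v}))\neq\emptyset$; I would secure this, together with non-emptiness of every $\text{Reach}_k(s)$, from liveness of $\mathcal{L}(S/G)$. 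Finally, the lemma retroactively forces the ``arbitrary'' initial vector in $\imath_0$ to equal $\xi^S(\epsilon)$, since consistency of $\mathcal{I}_0$ already constrains $(x_0,\mathbf{v})\in\imath_0\subseteq\imath_u$.
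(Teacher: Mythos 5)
Your proposal is correct and follows essentially the same route as the paper's own proof: an induction on $|\alpha|$ for the state component using the $\UR_\gamma$/$\OR_\sigma$ identities of Definition~\ref{def:sig-obs-inf-patt}, and then the same strengthened lemma that $\mathbf{v}^{\delta(s)}_{\imath_u}=\xi^S(s)$ for all $s\in P^{-1}_S(\alpha)$, proved by induction on the coordinate $k$ via the consistency rules \eqref{eq:curr-consis}--\eqref{eq:one-step-consis} and the decomposition of $\text{Reach}_{k}(s)$ through one-step extensions $s\sigma$. Your added technical caveats (states reachable both observably and unobservably, non-emptiness of $\mathcal{R}_{\mathcal{I}}$, and the initial vector being forced to $\xi^S(\epsilon)$) are refinements the paper glosses over, not a different argument.
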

\begin{proof}
The proof is provided in the Appendix.
\end{proof}

\subsection{Supervisor Synthesis Algorithm}

Now, we tackle the supervisor synthesis problem. 
Recall that, in our problem, we need to ensure both the liveness of the closed-loop system  and the satisfaction of the prediction-based property $(X_C,\Phi)$. 
These two requirements can both be evaluated based on the information state as follows. 

\begin{mydef}[\bf Liveness and Safety]\upshape
Let $\S$ be an IS-based control structure and 
$\imath\in \mathbb{I}^\S$ be a decision-state with $\gamma\in \Gamma$ be the unique  decision defined at $\imath$. 
We say decision-state $\imath$ is 
\begin{itemize}
    \item 
    \textbf{live} if any state within its unobservable reach cannot be blocked by the control decision, i.e., 
    \begin{equation}
        \forall x\in \UR_\gamma(  \state(\imath) ),\exists \sigma\in \gamma: \delta(x,\sigma)!.
    \end{equation}
    \item 
    \textbf{safe} if its prediction vector component satisfies  $(X_C,\Phi)$, i.e., $ \Phi( \mathsf{vec}(\imath) )=1$.\hfill$\blacksquare$
\end{itemize}
\end{mydef}

The following theorem states that, in order to synthesize a live supervisor that enforces the prediction-based property, it suffices to find an IS-based control structure in which all states are live and safe. 
\begin{mythm}\upshape\label{thm1}
Let $\S$ be an IS-based control structure 
such that all decision-states in it are live and safe. 
Then its induced partial-observation supervisor $S:\Sigma_o^*\to \Gamma$   is a solution to Problem~1, 
i.e., $\mathcal{L}(S/G)$ is live and  $S/G\models (X_C,\Phi)$.    
\end{mythm}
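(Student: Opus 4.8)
The plan is to establish the two required properties of the induced supervisor $S$ separately, with the heavy lifting delegated to Proposition~\ref{prop:IS}, which already identifies the prediction-vector component $\vecc(\mathbb{I}^\S(\alpha))$ of each reached decision-state with the true prediction set $\Xi^S(\alpha)$, and its state component $\state(\mathbb{I}^\S(\alpha))$ with the observation-based state estimate $\{\delta(s):s\in\O(\alpha)\}$. The safety part is then almost immediate: for any reachable observation $\alpha\in P(\mathcal{L}(S/G))$, the decision-state $\imath=\mathbb{I}^\S(\alpha)$ is safe by hypothesis, so $\Phi(\vecc(\imath))=1$; invoking Proposition~\ref{prop:IS} to rewrite $\vecc(\imath)=\Xi^S(\alpha)$ yields $\Phi(\Xi^S(\alpha))=1$, which is exactly $\alpha\models(X_C,\Phi)$. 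Since this holds for every reachable $\alpha$, we conclude $S/G\models(X_C,\Phi)$.

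First I would make precise the correspondence between runs of the closed-loop system and paths in $\S$. The key bridge is that every observation $\alpha\in P(\mathcal{L}(S/G))$ induces a unique path in $\S$ as in Eq.~\eqref{eq:flow}, and that $S(\alpha)$ is decoded as the unique control decision $\gamma$ at $\mathbb{I}^\S(\alpha)$. I would then argue liveness by an inductive/reachability argument: I claim that every state $x$ in the current (unobservable) reach under $S$ after observing $\alpha$ lies in $\UR_\gamma(\state(\imath))$ where $\imath=\mathbb{I}^\S(\alpha)$ and $\gamma=S(\alpha)$. Granting this, the liveness hypothesis on $\imath$ guarantees $\exists\sigma\in\gamma:\delta(x,\sigma)!$, i.e.\ no reachable state of the closed-loop system is blocked, so every string of $\mathcal{L}(S/G)$ can be extended and $\mathcal{L}(S/G)$ is live.

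The main obstacle is carefully justifying that the reachable configurations of $S/G$ after $\alpha$ are captured exactly by $\UR_{S(\alpha)}(\state(\mathbb{I}^\S(\alpha)))$, closing the loop between the semantic closed-loop language $\mathcal{L}(S/G)$ and the syntactic unobservable/observable reach operators used to build $\S$. This requires an induction on the length of $\alpha$ (equivalently on the number of observable events), where the inductive step uses the defining conditions of Definition~\ref{def:sig-obs-inf-patt}, namely $\state(\imath_u)=\UR_\gamma(\state(\imath))$ and $\state(\imath_\sigma)=\OR_\sigma(\state(\imath_u))$, together with the transition functions $f^\S_{\mathbb{I},\mathbb{A}}$ and $f^\S_{\mathbb{A},\mathbb{I}}$ of the control structure, to show that the state estimate propagates identically on both sides. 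I expect much of this state-tracking induction to overlap with, or be directly borrowed from, the proof of Proposition~\ref{prop:IS}, so the cleanest route is to phrase liveness as a corollary of that state-component correspondence: since $\state(\mathbb{I}^\S(\alpha))$ equals the post-observation state estimate and $\UR_\gamma$ recovers the full unobservable reach, the per-state liveness condition transfers verbatim to the closed-loop system. Once both liveness and safety are in hand, $S$ satisfies both clauses of Problem~1 and the theorem follows.
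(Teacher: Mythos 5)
Your proposal is correct and takes essentially the same approach as the paper: safety follows immediately from Proposition~\ref{prop:IS} via $\vecc(\mathbb{I}^\S(\alpha))=\Xi^S(\alpha)$, and liveness follows from the state-component correspondence (so that every reachable closed-loop state after $\alpha$ lies in $\UR_{S(\alpha)}(\state(\mathbb{I}^\S(\alpha)))$) combined with the per-decision-state liveness hypothesis. The only cosmetic difference is that the paper argues liveness by contrapositive (a blocked string yields a non-live decision-state) while you argue it directly; the underlying content is identical.
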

\begin{proof}
The proof is provided in the Appendix.
\end{proof}

In general, finding a control structure $\S$ satisfying Theorem~1 is a challenging task because there are choices whose consequences cannot be evaluated immediately. 
Nevertheless, since the solution space is bounded, 
one can  first enumerate all possible configurations and then extract a feasible IS-based structure, in which all states are live and safe,  from the bounded space. 
Such an idea is implemented by Algorithm~\ref{alg:main}, which consists of the following three steps.

\textbf{Step 1--Expand the Solution Space (line 1-5)}: 
The objective of this step is to initially build a structure 
\[
\B=( \mathbb{I}^\B,\mathbb{A}^\B,f^\B_{\mathbb{I},\mathbb{A}},f^\B_{\mathbb{A},\mathbb{I}}, \mathbb{I}_0^\B )
\]
that enumerates all possible information-states that are live, safe, and consistent.
This structure employs a finite intermediate structure
 to facilitate the search for a feasible $\S$. 
Compared to the control structure defined in Definition~\ref{def:con-struc}, $\B$ differs in the following two aspects:
\begin{itemize} 
    \item[1)] 
    Transition function
    $f^\B_{\mathbb{I},\mathbb{A}}\subseteq  \mathbb{I}^\B\times \Gamma\times\mathbb{A}^\B$   
    is allowed to be non-deterministic; and 
    \item[2)] 
    The initial state set $\mathbb{I}_0^\B \subseteq  \mathbb{I}^\B$ is not required to be a singleton.
\end{itemize} 
More specifically, 
in lines~1-2, we  set \(\B\)  to contain only safe and consistent initial states of the form
\(\{(x_0,\mathbf{v})\}\), 
where   
safety is ensured by   \(\Phi(\{\mathbf{v}\})=1\), 
and  consistency with the current state is ensured by
\(\mathbf{v}[0]=\chi_C(\{x_0\})\). 
Starting from each such initial state, we invoke the procedure \texttt{Expand}, which is a recursive procedure, to iteratively grow  \(\B\) via depth-first search, exploring only live, safe, and consistent information states. The search terminates upon encountering either:
(i) an information state violating liveness, safety, or consistency; or
(ii) a  previously visited information state. 
By construction, all states in \(\mathbb{I}^{\B}\) and \(\mathbb{A}^{\B}\) are guaranteed to satisfy liveness, safety, and information consistency. 

\begin{myexm}[\bf Initial Expansion]

\begin{figure*}
    \centering
        \begin{tikzpicture}[->,>={Latex}, thick, initial text={}, node distance= 1.2 cm, initial where=above, thick, Y node/.style={rectangle, rounded corners, draw, minimum size=4mm, font=\footnotesize}, Z node/.style={rectangle, draw, minimum size=4mm, font=\footnotesize}]  
   \node[initial, state, Y node] [xshift=0cm, yshift=1.8cm] (0) { $ \{ (0,(\texttt{N}, \texttt{N}, \texttt{N})) \} $ } ;  
   \node[state, Z node, color=red] [xshift=-12cm, yshift=1.5cm] (1) { $\left(\!\!\!\!\!\!\!\!\!\!\begin{array}{cc}
        & \left\{\!\!\!\!\!\!\!\!\!\!\begin{array}{cc}
             & (0,(\texttt{N}, \texttt{N}, \texttt{N})),(1,(\texttt{N}, \texttt{N}, \texttt{N})),\\
        & (2,(\texttt{N}, \texttt{N}, \texttt{U})),(3,(\texttt{N}, \texttt{N}, \texttt{U})),\\
        &(4,(\texttt{N}, \texttt{N}, \texttt{N}))
        \end{array} \!\!\!\!\right\}, \\
        & \left\{ \!\!\!\!\!\!\!\!\!\!\begin{array}{cc}
             &  \{(2,(\texttt{N}, \texttt{N}, \texttt{U})),(5,(\texttt{N}, \texttt{Y}, \texttt{N})),\\ 
             &(6,(\texttt{N}, \texttt{N}, \texttt{N}))\},\{(5,(\texttt{N}, \texttt{Y}, \texttt{N}))\}
        \end{array}\!\!\!\!
        \right\}
   \end{array}\!\!\!\!\right)$ } ;   

   \node[state, Z node][xshift=0cm, yshift=-5.5cm] (2) { $\left(\!\!\!\!\!\!\!\!\!\!\begin{array}{cc}
        & \left\{\!\!\!\!\!\!\!\!\!\!\begin{array}{cc}
             &  (0,(\texttt{N}, \texttt{N}, \texttt{N})),(3,(\texttt{N}, \texttt{N}, \texttt{U})),\\
             & (4,(\texttt{N}, \texttt{N}, \texttt{Y}))
        \end{array}\!\!\!\!\right\}, \\
        &\{\{(5,(\texttt{N}, \texttt{Y}, \texttt{N})),(6,(\texttt{N}, \texttt{Y}, \texttt{N}))\}\}
   \end{array}\!\!\!\!\right) $  } ; 

   \node[state, Y node, color=red][xshift=-6.5cm, yshift=-5.5cm] (3) { $ \{ (5,(\texttt{N}, \texttt{Y}, \texttt{N})), (6,(\texttt{N}, \texttt{Y}, \texttt{N})) \} $ } ; 

   \node[state, Y node, ] [xshift=-12cm, yshift=-0.2cm] (4) { $\{(2,(\texttt{N}, \texttt{N}, \texttt{U})), (5,(\texttt{N}, \texttt{Y}, \texttt{N})), (6,(\texttt{N}, \texttt{N}, \texttt{N}))\} $ } ;   
   
   \node[state, Z node, color=red] [xshift=-12cm, yshift=-1.7cm] (5) { 
   $\left(\!\!\!\!\!\!\!\!\!\!\begin{array}{cc}
        & \left\{\!\!\!\!\!\!\!\!\!\!\begin{array}{cc}
             & (2,(\texttt{N}, \texttt{N}, \texttt{U})),(5,(\texttt{N}, \texttt{Y}, \texttt{N}))\\
        & (6,(\texttt{N}, \texttt{N}, \texttt{N})),(7,(\texttt{Y}, \texttt{N}, \texttt{Y}))
        \end{array} \!\!\!\!\right\}, \\
        & \left\{ \!\!\!\!\!\!\!\!\!\!\begin{array}{cc}
             &  \{(2,(\texttt{N}, \texttt{N}, \texttt{U})),(5,(\texttt{N}, \texttt{Y}, \texttt{N})), \\
             &(6,(\texttt{N}, \texttt{N}, \texttt{N}))\},\{(5,(\texttt{N}, \texttt{Y}, \texttt{N}))\}
        \end{array}\!\!\!\!
        \right\}
   \end{array}\!\!\!\!\right)$ } ; 
   
   \node[state, Z node, color=red] [xshift=-6.5cm, yshift=0.7cm] (6) { $\left(\!\!\!\!\!\!\!\!\!\!\begin{array}{cc}
        & \{(0,(\texttt{N}, \texttt{N}, \texttt{N})),(1,(\texttt{N}, \texttt{N}, \texttt{N})), \\
        & (3,(\texttt{N}, \texttt{N}, \texttt{Y}))\},\{\{(5,(\texttt{N}, \texttt{Y}, \texttt{N}))\}\}
   \end{array}\!\!\!\!\right) $ } ; 
   \node[state, Z node, color=red] [xshift=-6.5cm, yshift=-0.8cm] (7) { $\left(\!\!\!\!\!\!\!\!\!\!\begin{array}{cc}
        & \{(0,(\texttt{N}, \texttt{N}, \texttt{N})),(3,(\texttt{N}, \texttt{N}, \texttt{Y}))\}, \\
        &\{\{(5,(\texttt{N}, \texttt{Y}, \texttt{N}))\}\}
   \end{array}\!\!\!\!\right) $ } ; 
   
   \node[state, Z node, color=red] [xshift=-6.5cm, yshift=-2.4cm] (8) { $\left(\!\!\!\!\!\!\!\!\!\!\begin{array}{cc}
        & \left\{\!\!\!\!\!\!\!\!\!\!\begin{array}{cc}
             & (0,(\texttt{N}, \texttt{N}, \texttt{N})),(1,(\texttt{N}, \texttt{N}, \texttt{N})), \\
             & (2,(\texttt{N}, \texttt{N}, \texttt{U}))
        \end{array}\!\!\!\!\right\},\\
        &\{\{(2,(\texttt{N}, \texttt{N}, \texttt{U}))\},\{(5,(\texttt{N}, \texttt{Y}, \texttt{N}))\}\}
   \end{array}\!\!\!\!\right)$ } ;
   \node[state, Y node][xshift=-6.5cm, yshift=-3.8cm] (9) {$ \{ (2,(\texttt{N}, \texttt{N}, \texttt{U})) \} $};
   \node[state, Z node, color=red][xshift=-12cm, yshift=-3.8cm] (10) {$\left(\!\!\!\!\!\!\!\!\!\!\begin{array}{cc}
        & \{(2,(\texttt{N}, \texttt{N}, \texttt{U}))\},\\
        &\{\{(2,(\texttt{N}, \texttt{N}, \texttt{U}))\},\{(5,(\texttt{N}, \texttt{Y}, \texttt{N}))\}\}
   \end{array}\!\!\!\!\right)$};
   \node[state, Z node, ] [xshift=0cm,yshift=-0.9cm] (11) { $\left(\!\!\!\!\!\!\!\!\!\!\begin{array}{cc}
        & \left\{\!\!\!\!\!\!\!\!\!\!\begin{array}{cc}
             &  (0,(\texttt{N}, \texttt{N}, \texttt{N})),(3,(\texttt{N}, \texttt{N}, \texttt{U})),\\
             & (4,(\texttt{N}, \texttt{N}, \texttt{N}))
        \end{array}\!\!\!\!\right\}, \\
        &\{\{(5,(\texttt{N}, \texttt{Y}, \texttt{N})),(6,(\texttt{N}, \texttt{N}, \texttt{N}))\}\}
   \end{array}\!\!\!\!\right) $ } ; 
   \node[state, Y node, ] [xshift=0cm, yshift=-2.2cm] (12) { $ \{ (5,(\texttt{N}, \texttt{Y}, \texttt{N})), (6,(\texttt{N}, \texttt{N}, \texttt{N})) \} $ } ; 
   \node[state, Z node, ] [xshift=0cm, yshift=-3.5cm] (13) { $\left(\!\!\!\!\!\!\!\!\!\!\begin{array}{cc}
        & \left\{\!\!\!\!\!\!\!\!\!\!\begin{array}{cc}
             &  (5,(\texttt{N}, \texttt{Y}, \texttt{N})),(6,(\texttt{N}, \texttt{N}, \texttt{N})),\\
             & (7,(\texttt{Y}, \texttt{N}, \texttt{Y}))
        \end{array}\!\!\!\!\right\},\\
        &\{\{(5,(\texttt{N}, \texttt{Y}, \texttt{N})),(6,(\texttt{N}, \texttt{N}, \texttt{N}))\}\}
   \end{array}\!\!\!\!\right) $ } ; 
   \node[state, Z node, ] [xshift=0cm, yshift=0.6cm] (14) { $\left(\left\{\!\!\!\!\!\!\!\!\!\!\begin{array}{cc}
        &  (0,(\texttt{N}, \texttt{Y}, \texttt{N}))),\\
        & (1,(\texttt{N}, \texttt{Y}, \texttt{N})))
   \end{array}\!\!\!\!\right\}\right)$ } ; 

   \node[initial, state, Y node, color=red] [xshift=-12cm, yshift=-5.5cm] (15) { $ \{ (0,(\texttt{N}, \texttt{N}, \texttt{U})) \} $ } ;

   \draw[->] (0) -- node [xshift=0.1cm, yshift=0.2cm] {$\Sigma\setminus\{a\}$} (2cm, 1.8cm)--([xshift=2cm]11.north);
   \draw[->] (0) -- (2.8cm, 1.8cm)--(2.8cm, -5.5cm)--(2);
   \draw[-, draw = blue] (-2.5cm,2.6cm) -- (2.6cm,2.6cm) -- (2.6cm, -4.5cm) -- (-2.5cm, -- -4.5cm)-- (-2.5cm, 2.6cm);
 
   \path[->]
   (0) edge node [xshift=1.8cm, yshift=0.2cm] {$\Sigma$} ([yshift=0.3cm]1.east)
   (0) edge node [xshift=0.8cm] {$\Sigma\setminus\{b,c\}$} (14)
   (1) edge node [xshift=0.3cm] {$o_1$} (4)
   (2) edge node [yshift=0.2cm] {$o_1$} (3)
   (0.west) edge node [xshift=-0.8cm, yshift=0.1cm] {$\Sigma\setminus\{b\}$} (6.east)
   (0.west) edge node [xshift=-1.4cm, yshift=-0.5cm] {$\Sigma\setminus\{a,b\}$} (7.east)
   (0.west) edge node [xshift=-1.2cm, yshift=-0.8cm] {$\Sigma\setminus\{c\}$} (8.east)

   (11) edge node [xshift=0.3cm] {$o_1$} (12)
   (4) edge [transform canvas={xshift=0.3cm}] node [xshift=0.3cm] {$\Sigma$} (5)
   (5) edge [transform canvas={xshift=-0.3cm}] node [xshift=-0.3cm] {$o_1$} (4)
   (8) edge node [xshift=0.3cm] {$o_1$} (9)
   (9) edge [transform canvas={yshift=0.1cm}] node [yshift=0.3cm] {$\Sigma$} (10)
   (10) edge [transform canvas={yshift=-0.1cm}] node [yshift=-0.3cm] {$o_1$} (9)
   (12) edge [transform canvas={xshift=0.3cm}] node [xshift=0.3cm] {$\Sigma$} (13)
   (13) edge [transform canvas={xshift=-0.3cm}] node [xshift=-0.3cm] {$o_1$} (12)
   ;
   \end{tikzpicture}
    \caption{Partial representation of structure \(\B\). Incomplete states are highlighted in red. The structure with all states complete is in the blue-lined box.}
    \label{fig:alg}
\end{figure*}
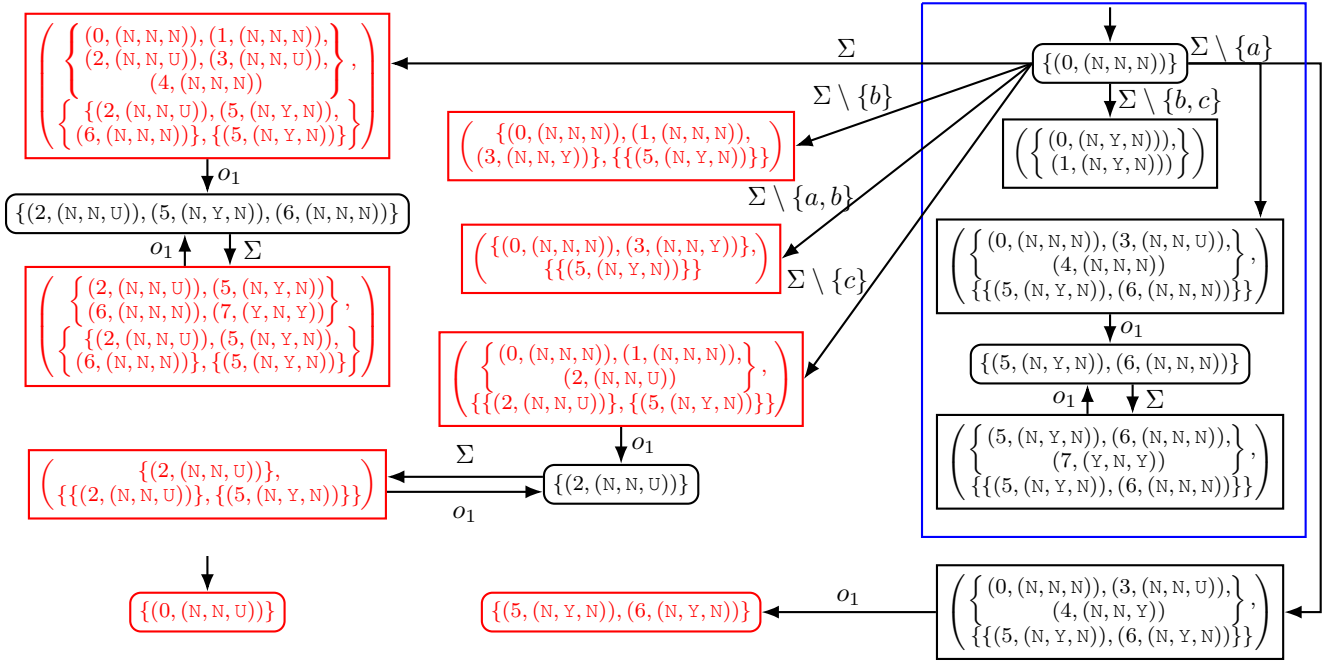

Let us continue with the running example. A portion of the overall structure \(\B\) is illustrated in Figure~\ref{fig:alg}. Initially, we set the initial state set as \(\mathbb{I}^\B_0 = \{\{(0,(\texttt{N}, \texttt{N}, \texttt{N}))\}, \{(0,(\texttt{N}, \texttt{N}, \texttt{U}))\}, \ldots\}\). Since \(\mathbb{A}(\{(0,(\texttt{N}, \texttt{N}, \texttt{U}))\}, \gamma) = \emptyset\) for all control decisions \(\gamma \in \Gamma\), the search terminates at this initial state. 
Starting from  another possible initial state \(\imath_0 = \{(0,(\texttt{N}, \texttt{N}, \texttt{N}))\}\), the search iteratively expands \(\B\) along decisions that do not disable both events \(a\) and \(c\) simultaneously, which preserve the liveness of \(\imath_0\), until it encounters the unsafe decision state \(\{(5,(\texttt{N}, \texttt{Y}, \texttt{N}))\}\).
It is worth noting that the same control decision, such as \(\Sigma\setminus\{a\}\), may lead \(\B\) to different consistent observation-states. However, some of them will lead to inconsistent states in the future. For instance, consider the state
\[
\left(\!\!\!\!\!\!\!\!\!\!\begin{array}{cc}
        & \left\{\!\!\!\!\!\!\!\!\!\!\begin{array}{cc}
             &  (0,(\texttt{N}, \texttt{N}, \texttt{N})),(3,(\texttt{N}, \texttt{N}, \texttt{U})),\\
             & (4,(\texttt{N}, \texttt{N}, \texttt{Y}))
        \end{array}\!\!\!\!\right\}, \\
        &\{\{(5,(\texttt{N}, \texttt{Y}, \texttt{N})),(6,(\texttt{N}, \texttt{Y}, \texttt{N}))\}\}
   \end{array}\!\!\!\!\right),
\] 
it can be later verified that its successor decision state \(\{(5,(\texttt{N}, \texttt{Y}, \texttt{N})),(6,(\texttt{N}, \texttt{Y}, \texttt{N}))\}\) is inconsistent, and thus the search terminates at this point.
\end{myexm}

\begin{algorithm}[t]
\caption{Synthesis of Control Structure \(\S\)}\label{alg:main} 
\KwIn{\(G\), \(\Sigma_c\), \(\Sigma_o\), \((X_C,\Phi)\)}
\KwOut{\(\S\) or its induced supervisor}

$\mathbb{I}_0^\B\gets \{\{(x_0,\mathbf{v})\}\in\mathbb{I}: \mathbf{v}[0]=\chi_C(\{x_0\}), \Phi(\{\mathbf{v}\})=1\} $
\\
$\mathbb{I}^\B\gets\mathbb{I}_0^\B$,
\ \
$\mathbb{A}^\B\gets\emptyset$,
\ \
$f^\B_{\mathbb{I},\mathbb{A}}\gets \emptyset$, 
\ \ 
$f^\B_{\mathbb{A},\mathbb{I}}\gets \emptyset$
\\
\For{$\imath_0\in \mathbb{I}_0^\B$}
{
$\texttt{Expand}(\B,\imath_0)$
} 
$\texttt{Prune}(\B)$
\\
\eIf{$\mathbb{I}_0^\B=\emptyset$}
{
\textbf{return} no solution exists}
{
Pick $\imath_0 \in \text{max}_\texttt{U}(\mathbb{I}^{\B}_0)$
\\
\(\mathbb{I}^{\S} \leftarrow \{ \imath_0 \}\), \(\mathbb{A}^{\S} \leftarrow \emptyset\), \(f^{\S} \leftarrow\emptyset\) \\
\(\S\gets \texttt{Extract}(\S,B,\imath_0)\)\\
\textbf{return} IS-based Control Structure \(\S\)} 
\end{algorithm}

\begin{algorithm}[t]
\renewcommand{\algorithmcfname}{Procedure}
\renewcommand{\thealgocf}{}                  
\caption{\(\texttt{Expand}(\B,\imath)\)}    
\addtocounter{algocf}{-1}   
\For{\(\gamma\in\Gamma\) \emph{such that} \(\imath\) \emph{is live under} \(\gamma\)}
{
        \For{\(\mathcal{I}=(\imath_u,\{\imath_\sigma\}_{\sigma\in \Sigma_o\cap \gamma})\in\mathbb{A}(\imath,\gamma)\)}
        {
            \(f^\B_{\mathbb{I},\mathbb{A}} \leftarrow f^\B_{\mathbb{I},\mathbb{A}} \cup \{ (\imath,\gamma,\mathcal{I}) \}\)\\
            \If{\(\mathcal{I}\not\in \mathbb{A}^{\B}\)}
            {
                \(\mathbb{A}^{\B} \leftarrow \mathbb{A}^{\B} \cup \{ \mathcal{I} \}\)\\
                    \For{\(\sigma\in\gamma\cap\Sigma_o\)}
                    {
                        \If{\(\imath_\sigma\) is safe}
                        {
                            \(f^\B_{\mathbb{A},\mathbb{I}} \leftarrow f^\B_{\mathbb{A},\mathbb{I}} \cup \{ (\mathcal{I},\sigma,\imath_\sigma) \}\)\\
                            \If{\(\imath_\sigma\not\in \mathbb{I}^{\B}\)}
                            {
                                \(\mathbb{I}^{\B}\leftarrow \mathbb{I}^{\B}\cup\{\imath_\sigma\}\)\\
                                \(\texttt{Expand}(\B,\imath_\sigma)\)
                            }
                        }
                    }
                
            }
        }

}
\end{algorithm}

\textbf{Step 2--Prune Incomplete States (line 6)}: 
Note that the structure $\B$ obtained after procedure \texttt{Expand} cannot be directly used for the purpose of control synthesis due to the presence of \emph{incomplete} states. 
Specifically, 
\begin{itemize}
    \item 
    A decision-state $\imath\in \mathbb{I}^\B$ is said to be incomplete 
    if no feasible control decision is defined. 
    Therefore, if such a state is reached, no future decision can be taken in order to ensure liveness, safety and consistency. 
    We define 
    \begin{equation}
  \mathbb{I}_{bad}^\B = \{\imath\in \mathbb{I}^\B \mid (\{\imath\}\times\Gamma\times \mathbb{A}^{\B})\cap f^\B=\emptyset\}
    \end{equation}
    as the set of incomplete decision states in $\B$. 
      \item 
    An observation-state $\mathcal{I}\in \mathbb{A}^\B$ is said to be incomplete if it lacks transitions \emph{for some} feasible observations. When such a state is reached, the occurrence of any missing observation (which is inherently uncontrollable, as the control decision has already been fixed) will force the system into a state that violates liveness, safety, or consistency.  
    We define 
    \begin{equation}
    \mathbb{A}_{bad}^\B\!=\! \{\mathcal{I}\!=\!(\imath_u,\{\imath_\sigma\}_{\sigma\in \Sigma_o\cap \gamma})\!\in\!  \mathbb{A}^\B\mid \exists \imath_\sigma.(\mathcal{I},\sigma,\imath_\sigma)\!\not\in\! f^\B\}  
    \end{equation}
    as the set of incomplete observation states in $\B$. 
\end{itemize}
Therefore, the objective of procedure \texttt{Prune} is to ensure the completeness of $\B$. 
However, removing currently incomplete states may introduce new incompleteness. For instance, eliminating a decision state $\imath_{bad}\in \mathbb{I}^\B_{bad}$ would cause any predecessor observation state $\mathcal{I}$ such that 
$(\mathcal{I},\sigma,\imath_{bad})$
to become incomplete in the modified structure. Consequently, an iterative removal process in the while-loop is required until no incomplete states remain.
 
\begin{myexm}[\bf Iterative Completeness Check]
  We continue with the running example. In Figure~3, all of the incomplete states are highlighted in red. For example, state \(\{ (5,(\texttt{N}, \texttt{Y}, \texttt{N})), (6,(\texttt{N}, \texttt{Y}, \texttt{N})) \}\)  is incomplete because it has no feasible control decision defined. Another example is the state
  \[
  \left(\!\!\!\!\!\!\!\!\!\!\begin{array}{cc}
        & \left\{\!\!\!\!\!\!\!\!\!\!\begin{array}{cc}
             & (0,(\texttt{N}, \texttt{N}, \texttt{N})),(1,(\texttt{N}, \texttt{N}, \texttt{N})), \\
             & (2,(\texttt{N}, \texttt{N}, \texttt{U}))
        \end{array}\!\!\!\!\right\},\\
        &\{\{(2,(\texttt{N}, \texttt{N}, \texttt{U}))\},\{(5,(\texttt{N}, \texttt{Y}, \texttt{N}))\}\}
   \end{array}\!\!\!\!\right),
  \]
  which is incomplete due to the absence of a transition for observation \(o_2\), which will lead \(\B\) to the unsafe state \(\{(5,(\texttt{N}, \texttt{Y}, \texttt{N}))\}\).
  By removing such red-highlighted states from \(\B\), the state
    \[
    \left(\!\!\!\!\!\!\!\!\!\!\begin{array}{cc}
            & \left\{\!\!\!\!\!\!\!\!\!\!\begin{array}{cc}
                 &  (0,(\texttt{N}, \texttt{N}, \texttt{N})),(3,(\texttt{N}, \texttt{N}, \texttt{U})),\\
                 & (4,(\texttt{N}, \texttt{N}, \texttt{Y}))
            \end{array}\!\!\!\!\right\}, \\
            &\{\{(5,(\texttt{N}, \texttt{Y}, \texttt{N})),(6,(\texttt{N}, \texttt{Y}, \texttt{N}))\}\}
       \end{array}\!\!\!\!\right),
    \] 
  also becomes incomplete, and will therefore be removed in a subsequent iteration. After this iterative pruning process, we obtain the complete structure, which is enclosed within the blue-lined box.
\end{myexm}

\begin{algorithm}[t]
\renewcommand{\algorithmcfname}{Procedure}
\renewcommand{\thealgocf}{}                  
\caption{\(\texttt{Prune}(\B)\)}    
\addtocounter{algocf}{-1}   
\While{$\mathbb{I}_{bad}^\B\neq\emptyset$ or $\mathbb{A}_{bad}^\B\neq \emptyset$}
{
Remove all incomplete states $\mathbb{I}_{bad}^\B$ and $\mathbb{A}_{bad}^\B$ from structure $\B$\\
Remove all transitions involving incomplete states $\mathbb{I}_{bad}^\B$ or $\mathbb{A}_{bad}^\B$ from structure $\B$ 
}
\end{algorithm}

\begin{algorithm}[t]
\renewcommand{\algorithmcfname}{Procedure}
\renewcommand{\thealgocf}{}                  
\caption{\(\texttt{Extract}(\S,\B,\imath)\)}    
\addtocounter{algocf}{-1}   
Pick $(\imath,\gamma^\star,\mathcal{I}^\star)\!\in\! f^\B$ such that 
\[
\forall (\imath,\gamma',\mathcal{I}')\!\in\! f^\B:
[\gamma^\star\not\subset\gamma']
\wedge
[(\gamma'=\gamma^\star)\!\Rightarrow\! (\mathcal{I}^\star\not<_\texttt{U}\mathcal{I}')]
\]\\
\(f^{\S} \leftarrow f^{\S} \cup \{ (\imath,\gamma^\star,\mathcal{I}^\star) \}\)\\
\If{\(\mathcal{I}^\star=((\imath_u,\{\imath_\sigma\}_{\sigma\in \Sigma_o\cap \gamma^\star}))\not\in \mathbb{A}^\S\)}
{
\(\mathbb{A}^{\S} \leftarrow \mathbb{A}^{\S} \cup \{ \mathcal{I}^\star \}\) \

\For{\(\sigma \in \gamma^\star \cap \Sigma_o\)}
{
\(f^{\S} \leftarrow f^{\S} \cup \{ (\mathcal{I}^\star,\sigma,\imath_\sigma) \}\)\\
\If{\(\imath_\sigma \notin \mathbb{I}^{\S}\)}
{
\(\mathbb{I}^{\S} \leftarrow \mathbb{I}^{\S} \cup \{ \imath_\sigma \}\)\\
\(\texttt{Extract}(\S,\B,\imath_\sigma)\)
}
}

}
\end{algorithm}

\textbf{Step 3--Extract a Supervisor (line 7-14)}:
The structure $\B$ remaining after procedure \texttt{Prune} can be directly used for control synthesis through the following steps:
\begin{itemize}
    \item[1)] 
    First, we select an initial decision-state $\imath\in \mathbb{I}_0^\B$; 
    \item[2)] 
    From the chosen decision-state $\imath$, we select a control decision $\gamma$ and observation-state $\mathcal{I}$ such that $(\imath, \gamma, \mathcal{I}) \in f_{\mathbb{I},\mathbb{A}}^\B$. Such a choice is guaranteed to exist since $\B$ is now complete;
    \item[3)] 
   For the reached observation-state $\mathcal{I} = (\imath_u, \{\imath_\sigma\}_{\sigma \in \Sigma_o \cap \gamma})$, we consider all successor transitions $(\mathcal{I}, \sigma, \imath_\sigma)$ for each $\sigma \in \Sigma_o \cap \gamma$. These transitions are all well-defined in $\mathcal{B}$ due to the completeness of observation-states;
    \item[4)] 
    We repeat this process via depth-first search until no new states can be visited. This procedure effectively constructs a control structure $\S$ from $\B$, as both the initial state and transition functions are now deterministic.
\end{itemize}
In fact, if one is only interested in enforcing the prediction-based property $(X_C, \Phi)$, the above process already suffices.
Here, we proceed further and seek the ``optimal" selection when multiple choices exist for initial states and control-observation-state pairs $(\gamma, \mathcal{I})$.
To this end, we introduce a partial order as follows.
For two information states $\imath,\imath'\in \mathbb{I}$ such that $\mathsf{state}(\imath)=\mathsf{state}(\imath')$, we define:
    \begin{itemize}
        \item 
        $\imath\leq_\texttt{U} \imath'$  if for all $ (x,\mathbf{v})\in\imath, (x,\mathbf{v}')\in\imath'$ and each instant $i\in\{1,...,H\}$, it holds that 
        $(\mathbf{v}[i]\!=\!\texttt{U}) \Rightarrow (\mathbf{v}'[i]\!=\!\texttt{U})$;
        \item \(\imath<_\texttt{U} \imath'\)  if \(\imath\leq_\texttt{U} \imath'\) and there exists \((x,\mathbf{v})\in\imath, (x,\mathbf{v}')\in\imath', i\in\{1,...,H\}\)  such that
        \((\mathbf{v}[i]\ne\texttt{U})\wedge (\mathbf{v}'[i]=\texttt{U})\). 
    \end{itemize}
When pick an initial state from $\mathbb{I}_0^\B$ in line~10 of Algorithm~\ref{alg:main},
we select a maximal element in $\mathbb{I}_0^\B$, i.e., 
\begin{equation}
\text{max}_{\texttt{U}}(\mathbb{I}_0^\B)
=
\{
\imath \in \mathbb{I}^{\B}_0\mid 
\forall \imath'\in \mathbb{I}^{\B}_0:\imath\not<_\texttt{U} \imath'
\}.
\end{equation}
Given  decision-state \(\imath\) and control decision \(\gamma\), 
for two single-observation information patterns \(\mathcal{I}=(\imath_u,\{\imath_\sigma\}_{\sigma\in \Sigma_o\cap \gamma}),\mathcal{I}'=(\imath'_u,\{\imath'_\sigma\}_{\sigma\in \Sigma_o\cap \gamma})\in\mathbb{A}(\imath,\gamma)\), we define:
    \begin{itemize}
        \item 
        \(\mathcal{I}\leq_\texttt{U}\mathcal{I}'\) if 
        \(\imath_u \leq_\texttt{U} \imath'_u\) and 
        \(\forall \sigma\in \Sigma_o\cap \gamma: \imath_\sigma \leq_\texttt{U} \imath'_\sigma\);
        \item
        \(\mathcal{I}<_\texttt{U}\mathcal{I}'\)  if 
        \(\mathcal{I}\leq_\texttt{U}\mathcal{I}'\) and either \(\imath_u <_\texttt{U} \imath'_u\) or \(\exists (\sigma\in \Sigma_o\cap \gamma: \imath_\sigma <_\texttt{U} \imath'_\sigma\).
    \end{itemize}
Therefore, when selecting the ``optimal" pair $(\gamma^\star, \mathcal{I}^\star)$ in line 1 of procedure \texttt{Extract}, we need to ensure that:
(i) the control decision $\gamma^\star$ is locally maximal, enabling as many events as possible; and
(ii) when control decisions are equivalent, the observation-state is chosen as a maximal element under the partial order $<_\texttt{U}$.
Later, we will prove that such choices not only guarantee property enforcement but also ensure maximal permissiveness.

\begin{myexm}[\bf Synthesize a Control Structure]
  After applying procedures \texttt{Expand} and \texttt{Prune}, we obtain the structure enclosed in the blue-lined box. From this structure, we identify the sole initial decision-state \(\{ (0,(\texttt{N}, \texttt{N}, \texttt{N})) \}\), from which two control decisions \(\Sigma\setminus\{a\}\) and \(\Sigma\setminus\{b,c\}\) are available, each leading to a different observation-state. If we choose \(\Sigma\setminus\{b,c\}\), the resulting control structure consists of the two states in the top-right corner of Figure~\ref{fig:alg}, which induces the supervisor \(S_2\) whose controlled behavior is as shown in Figure~\ref{fig:G2}. Alternatively, if we choose \(\Sigma\setminus\{a\}\), procedure \texttt{Extract} expands the control structure until it revisits the state \(\{ (5,(\texttt{N}, \texttt{Y}, \texttt{N})), (6,(\texttt{N}, \texttt{N}, \texttt{N})) \}\). This yields a different structure shown in Figure~\ref{fig:controlstructure}, corresponding to the supervisor \(S_1\), whose controlled behavior is as shown in Figure~\ref{fig:G1}.
\end{myexm}

\subsection{Correctness, Permissiveness and Complexity Analysis}
In this subsection, we analyze the correctness of the supervisory synthesis algorithm. 
First, we establish the soundness of the algorithms, which states that the synthesized supervisor is indeed live and enforces the prediction-based property.

\begin{mylem}\upshape
Given the structure \(\S\) returned by Algorithm~\ref{alg:main}, the partial-observation supervisor \(S: \Sigma_o^* \to \Gamma\) induced from \(\S\) provides a solution to Problem~\ref{problem}.
\end{mylem}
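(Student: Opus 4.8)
The plan is to reduce the claim to Theorem~\ref{thm1}, which guarantees that any IS-based control structure whose decision-states are all live and safe induces a supervisor solving Problem~\ref{problem}. Thus it suffices to establish two facts about the output $\S$ of Algorithm~\ref{alg:main}: (A) $\S$ is a well-defined IS-based control structure in the sense of Definition~\ref{def:con-struc}; and (B) every decision-state of $\S$ is live and safe. Since the state spaces $\mathbb{I}$ and $\mathbb{A}$ are finite, all three procedures terminate, and by hypothesis the algorithm returns a structure rather than ``no solution,'' so $\mathbb{I}_0^\B\neq\emptyset$ and the chosen initial decision-state $\imath_0\in\mathrm{max}_{\texttt{U}}(\mathbb{I}_0^\B)$ is well-defined and survives pruning.

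Property (A) is the crux and the main obstacle. The determinism requirements are immediate: \texttt{Extract} selects exactly one pair $(\gamma^\star,\mathcal{I}^\star)$ per decision-state, yielding a deterministic $f^\S_{\mathbb{I},\mathbb{A}}$, and it installs precisely the transitions $(\mathcal{I}^\star,\sigma,\imath_\sigma)$ for all $\sigma\in\Sigma_o\cap\gamma^\star$, matching the form of $f^\S_{\mathbb{A},\mathbb{I}}$ demanded by Definition~\ref{def:con-struc}. The delicate point is that the definition requires $f^\S_{\mathbb{I},\mathbb{A}}$ to be total on decision-states and $f^\S_{\mathbb{A},\mathbb{I}}$ to supply \emph{every} observation successor, so one must show that \texttt{Extract} never stalls; this is exactly where the completeness produced by \texttt{Prune} is needed. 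I would argue by induction on the recursion depth of \texttt{Extract}, with the invariant that every state it visits is a surviving (hence complete) state of the pruned $\B$. The base case is $\imath_0$. For the inductive step I would use two properties of the fixed point reached by \texttt{Prune}: upon termination $\mathbb{I}_{bad}^\B=\emptyset$ and $\mathbb{A}_{bad}^\B=\emptyset$, so each surviving decision-state has at least one outgoing triple $(\imath,\gamma,\mathcal{I})\in f^\B$ to select and each surviving observation-state carries its full set $\{\imath_\sigma\}_{\sigma\in\Sigma_o\cap\gamma}$ of successors; moreover \texttt{Prune} deletes every transition incident to a removed state, so the selection rule of \texttt{Extract}, ranging only over triples in $f^\B$, can only lead to surviving states. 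Chaining these, every successor \texttt{Extract} follows is again surviving and complete, closing the induction and showing that $\S$ conforms to Definition~\ref{def:con-struc}. I would stress that the particular optimality-based choice of $(\gamma^\star,\mathcal{I}^\star)$ plays no role here: any surviving triple would do, since solution-hood depends only on completeness, not on permissiveness.

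Property (B) then follows with little extra work. Safety is guaranteed directly by \texttt{Expand}: the initial states in $\mathbb{I}_0^\B$ satisfy $\Phi(\{\mathbf{v}\})=1$ by line~1 of Algorithm~\ref{alg:main}, and each later decision-state $\imath_\sigma$ enters $\mathbb{I}^\B$ only through the guarded branch ``\emph{if $\imath_\sigma$ is safe}''; since \texttt{Prune} and \texttt{Extract} only delete or copy states, every decision-state $\imath$ of $\S$ satisfies $\Phi(\vecc(\imath))=1$, i.e.\ is safe. For liveness, part (A) supplies each decision-state $\imath$ of $\S$ with its unique assigned decision $\gamma^\star$, and the triple $(\imath,\gamma^\star,\mathcal{I}^\star)$ was recorded by \texttt{Expand} only inside its loop over control decisions $\gamma$ for which $\imath$ is live; hence $\imath$ is live under $\gamma^\star$. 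Therefore all decision-states of $\S$ are live and safe, and invoking Theorem~\ref{thm1} yields that $\mathcal{L}(S/G)$ is live and $S/G\models(X_C,\Phi)$, completing the proof.
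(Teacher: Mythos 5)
Your proof is correct and takes essentially the same route as the paper: reduce to Theorem~\ref{thm1} by showing that every decision-state of $\S$ is live and safe, these properties being inherited from $\B$ through the guards in \texttt{Expand} (liveness checked before recording a triple, safety checked before admitting $\imath_\sigma$) and preserved because $\S$ is carved out of the pruned $\B$. The paper's own proof is a two-line version of yours; your part (A) — arguing via the fixed point of \texttt{Prune} (empty $\mathbb{I}_{bad}^\B$ and $\mathbb{A}_{bad}^\B$, and deletion of all transitions incident to removed states) that \texttt{Extract} never stalls and hence $\S$ is a well-defined, complete control structure — is exactly the detail the paper compresses into the phrase ``extracted as a subgraph by construction,'' so it is added rigor rather than a different approach.
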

\begin{proof}
The proof follows directly from Theorem~\ref{thm1} and the construction of \(\B\), which ensures that all decision-states in \(\B\) are live and safe. Since 
$\S$ is extracted as a subgraph by construction, all decision-states in \(\S\) are also live and safe. Therefore, \(\S\) constitutes a solution to Problem~\ref{problem}.
\end{proof}

Note that our synthesis algorithm restricts the solution space of the supervisor to the IS-based control structure. In general, a supervisor may require more memory than what can be represented as an IS-based control structure. However, the following result shows that this restriction is without loss of generality, thereby establishing the completeness of the algorithm.

\begin{mylem}\label{lem:complete}\upshape
Algorithm~\ref{alg:main} will not return ``no solution exists" when a solution to Problem~\ref{problem} exists.
\end{mylem}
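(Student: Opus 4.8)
The plan is to prove the useful direction: assuming some a priori unrestricted supervisor $S^\star:\Sigma_o^*\to\Gamma$ solves Problem~\ref{problem}, I will exhibit a subgraph of $\B$ whose states are never deleted by \texttt{Prune}, so that after pruning $\mathbb{I}_0^\B\neq\emptyset$, Algorithm~\ref{alg:main} reaches line~10 and does not return ``no solution exists''. The conceptual heart is to show that the information flow of an arbitrary solution can be faithfully captured inside $\mathbb{I}$ despite the restriction that each plant state carries a \emph{unique} prediction vector.

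The first step is to observe that this uniqueness restriction is without loss of generality. For each $\alpha\in P(\L(S^\star/G))$ define $\imath_{S^\star}(\alpha)=\{(\delta(s),\xi^{S^\star}(s)):s\in\O(\alpha)\}$. I claim $\imath_{S^\star}(\alpha)\in\mathbb{I}$, i.e.\ if $s,s'\in\O(\alpha)$ satisfy $\delta(s)=\delta(s')$ then $\xi^{S^\star}(s)=\xi^{S^\star}(s')$. This follows from a separation-type argument on predictions: since $P(s)=P(s')=\alpha$, for every continuation $t$ the enabling decisions $S^\star(P(st''))=S^\star(\alpha P(t''))=S^\star(P(s't''))$ coincide along all prefixes $t''\le t$, and $\delta(st)=\delta(s't)$ because $\delta(s)=\delta(s')$; hence $st\in\L(S^\star/G)\Leftrightarrow s't\in\L(S^\star/G)$, giving $\text{Reach}_k(s)=\text{Reach}_k(s')$ for all $k$ and thus $\xi^{S^\star}(s)=\xi^{S^\star}(s')$. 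Consequently the augmented estimates of $S^\star$ genuinely reside in $\mathbb{I}$.

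Next I would assemble the reference subgraph $\B_{S^\star}$ consisting of the decision-states $\imath_{S^\star}(\alpha)$, the observation-states $\mathcal{I}_{S^\star}(\alpha)=(\imath_u,\{\imath_\sigma\}_{\sigma\in\Sigma_o\cap\gamma_\alpha})$ obtained by taking $\gamma_\alpha=S^\star(\alpha)$, augmenting the unobservable reach with the true prediction vectors of $S^\star$ and setting $\imath_\sigma=\imath_{S^\star}(\alpha\sigma)$, together with the transitions $(\imath_{S^\star}(\alpha),\gamma_\alpha,\mathcal{I}_{S^\star}(\alpha))$ and $(\mathcal{I}_{S^\star}(\alpha),\sigma,\imath_{S^\star}(\alpha\sigma))$. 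Three facts must be checked: liveness of each $\imath_{S^\star}(\alpha)$ under $\gamma_\alpha$ (immediate from liveness of $\L(S^\star/G)$); safety (from $S^\star/G\models(X_C,\Phi)$ together with the identity $\vecc(\imath_{S^\star}(\alpha))=\Xi^{S^\star}(\alpha)$, which holds by construction); and consistency of each $\mathcal{I}_{S^\star}(\alpha)$, which follows because the true vectors obey the recursion $\text{Reach}_k(sw)=\bigcup_\sigma \text{Reach}_{k-1}(sw\sigma)$ and hence satisfy Eqs.~\eqref{eq:curr-consis}--\eqref{eq:one-step-consis}. An induction on $|\alpha|$ then shows that \texttt{Expand}, started from $\imath_{S^\star}(\epsilon)=\{(x_0,\xi^{S^\star}(\epsilon))\}\in\mathbb{I}_0^\B$, visits every state of $\B_{S^\star}$ and installs all its transitions, so $\B_{S^\star}$ is a subgraph of the structure produced after line~5.

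Finally I would argue that no state of $\B_{S^\star}$ is ever removed by \texttt{Prune}, by induction on the order of deletion. By construction $\B_{S^\star}$ is \emph{complete and closed}: every decision-state $\imath_{S^\star}(\alpha)$ carries the outgoing transition labelled $\gamma_\alpha$ inside $\B_{S^\star}$, and every observation-state $\mathcal{I}_{S^\star}(\alpha)$ has, for each feasible observation $\sigma$ (i.e.\ $\OR_\sigma(\state(\imath_u))\neq\emptyset$), the successor $\imath_{S^\star}(\alpha\sigma)\in\B_{S^\star}$, which is safe. Suppose for contradiction that some state of $\B_{S^\star}$ is deleted and consider the first such deletion. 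A decision-state is deleted only upon becoming incomplete, i.e.\ after all its outgoing transitions vanish; but the transition to $\mathcal{I}_{S^\star}(\alpha)$ can vanish only if $\mathcal{I}_{S^\star}(\alpha)$ was deleted earlier, contradicting minimality. Symmetrically, an observation-state is deleted only after losing a feasible successor $\imath_{S^\star}(\alpha\sigma)$, again contradicting minimality. Hence $\imath_{S^\star}(\epsilon)$ survives and $\mathbb{I}_0^\B\neq\emptyset$ after pruning. I expect the main obstacle to be the uniqueness/separation property of predictions that legitimizes working inside $\mathbb{I}$ (Step~1), together with the careful treatment of infeasible observations and the fixpoint nature of \texttt{Prune} (Step~3); once these are in place the completeness conclusion is immediate.
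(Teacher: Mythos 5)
Your proposal is correct and follows essentially the same route as the paper's proof: construct the decision--observation structure induced by the assumed solution $S^\star$ (decision-states $\{(\delta(s),\xi^{S^\star}(s)):s\in\mathcal{O}(\alpha)\}$ and the corresponding observation-states), show that \texttt{Expand} generates it because all its states are live, safe, and consistent, and then show that its completeness protects it from deletion in \texttt{Prune}, whence $\mathbb{I}_0^{\B}\neq\emptyset$ and line~7 is not triggered. Your Step~1 (the separation-type argument that observationally equivalent strings reaching the same plant state have identical reach sets, so the induced estimates genuinely lie in $\mathbb{I}$) and your first-deletion induction for \texttt{Prune} survival are points the paper's proof uses implicitly or states more briefly, but they are refinements within the same argument rather than a different approach.
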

\begin{proof}
The proof is provided in the Appendix.
\end{proof}

By combining Lemmas~1 and~2, we can finally establish the correctness of the synthesis algorithm.
\begin{mythm}\upshape
    Algorithm~\ref{alg:main} correctly solves Problem~\ref{problem}, i.e., it is both sound and complete.
\end{mythm}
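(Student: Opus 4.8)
The plan is to obtain this theorem as an immediate consequence of the soundness result (Lemma~1) and the completeness result (Lemma~2), after first confirming that Algorithm~\ref{alg:main} always terminates and returns exactly one of its two possible outputs. Since $X$ and $\V$ are finite, the information-state space $\mathbb{I}\subseteq 2^{X\times\V}$ and the observation-state space $\mathbb{A}$ are both finite, so the recursive search operates over a bounded domain. Concretely, I would observe that \texttt{Expand} recurses on a decision-state $\imath_\sigma$ only after testing $\imath_\sigma\notin\mathbb{I}^\B$ (and similarly inserts each observation-state only when $\mathcal{I}\notin\mathbb{A}^\B$), so it visits every state at most once and halts; that \texttt{Prune} removes at least one incomplete state per iteration of its while-loop from a finite structure and hence halts; and that \texttt{Extract} likewise recurses only on previously unvisited decision-states. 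It follows that Algorithm~\ref{alg:main} terminates and, through its final if--else test on whether $\mathbb{I}_0^\B=\emptyset$ after pruning, returns either the message ``no solution exists'' or an IS-based control structure $\S$, but never both.

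Then I would assemble correctness from the two directions. For soundness, whenever the algorithm returns a control structure $\S$, Lemma~1 guarantees that its induced supervisor $S$ is live and satisfies $S/G\models(X_C,\Phi)$, i.e.\ $S$ solves Problem~\ref{problem}; hence every positive output of the algorithm is a valid solution. For the correctness of the negative answer, I would invoke the contrapositive of Lemma~\ref{lem:complete}: since that lemma asserts the algorithm does not report ``no solution exists'' whenever a solution exists, reporting ``no solution exists'' implies that Problem~\ref{problem} admits no solution, so the declaration is correct. Combining the two cases with the termination argument, the output of Algorithm~\ref{alg:main} is always correct—it returns a valid supervisor exactly when one exists and otherwise correctly reports infeasibility—which is precisely the claim that the algorithm is both sound and complete.

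The main obstacle here is essentially displaced: no substantive difficulty remains at the level of this theorem, since all the technical content resides in Lemmas~1 and~2. The only point requiring care is the logical bookkeeping of the case analysis, namely ensuring that the two lemmas jointly cover both possible outputs and that termination guarantees one such output is always produced, so that soundness and completeness together yield full correctness rather than a single implication.
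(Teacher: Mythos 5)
Your proposal is correct and follows essentially the same route as the paper, which obtains this theorem directly by combining Lemma~1 (soundness of any returned structure $\S$) with Lemma~2 (the algorithm never reports ``no solution exists'' when a solution exists). The explicit termination argument you add (finiteness of $\mathbb{I}$ and $\mathbb{A}$, visit-once recursion in \texttt{Expand} and \texttt{Extract}, monotone removal in \texttt{Prune}) is left implicit in the paper's complexity discussion, so including it is a harmless and reasonable strengthening rather than a departure.
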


We finally show that the supervisor synthesized is maximally permissive in the sense of language inclusion.

\begin{mythm}\label{thm:perm}\upshape
Given the structure \(\S\) returned by Algorithm~\ref{alg:main}, the partial-observation supervisor \(S: \Sigma_o^* \to \Gamma\) induced from \(\S\) is maximally permissive, i.e.,
\[
(\forall S' \text{ that solves Probelm~\ref{problem}})[\mathcal{L}(S/G)\not\subset\mathcal{L}(S'/G)]
\]
\end{mythm}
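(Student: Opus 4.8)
The plan is to argue by contradiction: suppose some supervisor $S'$ solves Problem~\ref{problem} yet $\mathcal{L}(S/G)\subsetneq \mathcal{L}(S'/G)$. First I would record the elementary consequence of this strict inclusion. Since $\mathcal{L}(S/G)\subseteq\mathcal{L}(S'/G)$, for every $s\in\mathcal{L}(S/G)$ and every feasible $\tau$ (i.e.\ $s\tau\in\mathcal{L}(G)$) we have $\tau\in S(P(s))\Rightarrow s\tau\in\mathcal{L}(S/G)\subseteq\mathcal{L}(S'/G)\Rightarrow\tau\in S'(P(s))$; hence $S'$ is pointwise at least as permissive as $S$ on the part of the plant reachable under $S$. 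Then I would pick a minimal-length witness $w=s\sigma\in\mathcal{L}(S'/G)\setminus\mathcal{L}(S/G)$, so that $s\in\mathcal{L}(S/G)$, $\sigma\in S'(P(s))\setminus S(P(s))$, and along every prefix of $s$ the two supervisors enable exactly the same feasible events (any earlier discrepancy would produce a shorter witness). Writing $\alpha=P(s)$ and $\imath=\mathbb{I}^\S(\alpha)$, this makes the state estimate $\state(\imath)$ along that run agree with the one generated by $S'$ on the path $s$.

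Next I would \emph{embed $S'$ into the pruned structure} $\B$. Using the construction underlying the completeness proof (Lemma~\ref{lem:complete}) together with Proposition~\ref{prop:IS}, the genuine information states of $S'$ --- each plant state augmented with its true prediction vector --- are live (because $\mathcal{L}(S'/G)$ is live), safe (because $S'\models(X_C,\Phi)$) and consistent (the prediction vectors are realized by the actual future of $S'$), and they admit valid infinite continuations; hence none of them is discarded by \texttt{Prune} and they all occur as decision/observation states of $\B$. The only subtlety is that an arbitrary $S'$ need not be IS-based, so two $\alpha$-equivalent strings reaching the same plant state may carry different vectors; I would absorb this exactly as in the completeness proof, passing to a single, possibly more uncertain, consistent pattern. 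The upshot is that $\B$ contains a surviving transition out of some decision-state $\hat\imath$ with $\state(\hat\imath)=\state(\imath)$ under a control decision $\hat\gamma$ that enables $\sigma$.

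The technical core is an induction along the common observation prefix showing that the two greedy choices in \texttt{Extract} make the run of $S$ maximal. Concretely, I would prove the invariant that for every prefix $\beta$ of $\alpha$ the decision-state $\mathbb{I}^\S(\beta)$ is $<_\texttt{U}$-maximal among all decision-states of $\B$ reachable over $\beta$ with the same past feasible behavior, and that $S(\beta)$ is a locally maximal control decision there; the base case is the $\text{max}_\texttt{U}(\mathbb{I}^\B_0)$ selection in Algorithm~\ref{alg:main} and the inductive step is the choice of $(\gamma^\star,\mathcal{I}^\star)$ at the top of procedure \texttt{Extract} (locally maximal $\gamma^\star$, then the $<_\texttt{U}$-maximal, i.e.\ most \texttt{U}-uncertain, successor $\mathcal{I}^\star$). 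Coupled with this I need a monotonicity lemma: if $\imath\le_\texttt{U}\imath'$ share the same state component and a control decision $\gamma$ admits a live, safe, consistent, surviving single-observation pattern from $\imath$, then it does from $\imath'$ as well. Granting these, since the $<_\texttt{U}$-maximal state $\imath$ actually reached by $S$ dominates $\hat\imath$, the decision $\hat\gamma$ --- and therefore $\sigma$ --- is available from $\imath$ in $\B$; as $\sigma\notin S(\alpha)$, this contradicts the local maximality of $S(\alpha)$ recorded in the invariant, completing the argument.

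I expect the monotonicity lemma and the precise form of the invariant to be the main obstacle. The delicate point is reconciling the language-level strict inclusion with the vector-level order $\le_\texttt{U}$: a more permissive future manufactures strictly more uncertainty (more \texttt{U} entries), so I must show that the extra uncertainty carried by $S'$ is already present in the $<_\texttt{U}$-maximal state that \texttt{Extract} commits to, and that enabling $\sigma$ there cannot break safety or consistency (the witness from $S'$ guarantees a safe surviving continuation). Handling the case where $S'$'s state estimate strictly exceeds that of $S$ on $\alpha$-equivalent strings \emph{other} than $s$, and absorbing the non-IS-based memory of an arbitrary $S'$ through the completeness construction, are the places where the argument must be carried out most carefully.
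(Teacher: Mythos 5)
Your overall strategy matches the paper's: argue by contradiction, isolate a minimal point of disagreement $\alpha$ with $S(\alpha)\subset S'(\alpha)$ and agreement on all proper prefixes, embed $S'$ into the pruned structure $\B$ via the construction from Lemma~\ref{lem:complete}, walk the two induced paths in parallel, and derive a contradiction with the selection rule in procedure \texttt{Extract}. However, the technical core of your induction rests on two ingredients that are unproven and, more importantly, stronger than what the algorithm actually guarantees: (i) the invariant that $\mathbb{I}^{\S}(\beta)$ is $<_{\texttt{U}}$-maximal among \emph{all} decision-states of $\B$ reachable over $\beta$, and (ii) the ``monotonicity lemma'' transferring an available control decision from a $\leq_{\texttt{U}}$-dominated information state to a dominating one. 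Procedure \texttt{Extract} only enforces \emph{local} non-domination: at the decision-state it occupies, the chosen pattern satisfies $\mathcal{I}^{\star}\not<_{\texttt{U}}\mathcal{I}'$ only against sibling patterns \emph{with the same control decision}; nothing is guaranteed against decision-states reached through other (incomparable) control choices, so invariant (i) does not follow from the algorithm. Moreover, your final domination claim points the wrong way: since $\mathcal{L}(S/G)\subseteq\mathcal{L}(S'/G)$, the inclusion of reachable sets forces $\imath_i\leq_{\texttt{U}}\imath'_i$ (extra permissiveness can only create more \texttt{U} entries on the $S'$ side), so the $S'$-state dominates the one reached by $S$, not conversely.

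The paper closes exactly this gap by a simpler mechanism that needs no monotonicity transfer. Because the competitor path is the one induced by $S'$, the reach-set inclusion gives $\leq_{\texttt{U}}$-comparability for free, together with the facts that a \texttt{Y} entry under $S$ cannot become \texttt{N} under $S'$ and vice versa. Combining this with the non-domination guarantees that the algorithm does provide (line~10 of Algorithm~\ref{alg:main} for the initial state, and line~1 of \texttt{Extract} at each step, which is applicable precisely because the two supervisors use the \emph{same} control decision on every proper prefix of $\alpha$) forces the prediction vectors to coincide, i.e., stepwise \emph{equality} $\imath_i=\imath'_i$ along the whole common prefix. Once $\imath_n=\imath'_n$, both $S(\alpha)$ and $S'(\alpha)\supsetneq S(\alpha)$ label transitions out of the same decision-state of the pruned $\B$, which directly contradicts the requirement $\gamma^{\star}\not\subset\gamma'$ in line~1 of \texttt{Extract}. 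If you replace your global maximality invariant and monotonicity lemma by this stepwise-equality argument, restricting all $<_{\texttt{U}}$-comparisons to the $S'$-induced path, your proof goes through and essentially coincides with the paper's.
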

\begin{proof}
The proof is provided in the Appendix.
\end{proof}

We conclude this section by discussing the complexity of the control synthesis algorithm. To synthesize an IS-based supervisor, we first need to construct the structure \(\B\), which contains at most \((1+3^{|H|})^{|X|}\) decision-states. For each decision-state, there are at most \(2^{|\Sigma_c|}\) control decisions defined,  for each decision, there are at most $|\mathbb{I}^\B|^{1+|\Sigma_o|}=(1+3^{|H|})^{|X|\cdot(1+|\Sigma_o|)}$ observation-states defined, and thus at most $(1+3^{|H|})^{|X|\cdot(1+|\Sigma_o|)}$ transitions defined. 
Finally, for each observation-state, there are at most $|\Sigma_o|$ transitions defined. 
Therefore, in the worst case, the largest possible \(\B\) contains \((1+3^{|H|})^{|X|}+2^{|\Sigma_c|}\cdot (1+3^{|H|})^{|X|\cdot(|\Sigma_o|+2)} \) states and \((1+|\Sigma_o|)\cdot 2^{|\Sigma_c|}\cdot (1+3^{|H|})^{|X|\cdot(|\Sigma_o|+2)}\) transitions. The complexity of  procedure \texttt{Prune} is quadratic in the size of \(\B\). The complexity of  procedure \texttt{Extract} is linear in the size of the pruned \(\B\) which also has \((1+3^{|H|})^{|X|}+2^{|\Sigma_c|}\cdot (1+3^{|H|})^{|X|\cdot(|\Sigma_o|+2)} \) states in the worst case. Therefore, the entire complexity of the proposed control synthesis algorithm is exponential in the size of the system \(G\) and the horizon of the prediction vector \(\mathbb{V}\). However, since synthesizing a partial observation supervisor is inherently PSPACE-hard, this complexity seems unavoidable due to the partial observation nature of our problem.

\section{Conclusion}\label{sec: conc}
In this paper, we formulated the general notion of prediction-based properties and solved the corresponding supervisory synthesis problem.
We demonstrated that the proposed framework is quite general and can model many practical problems, such as active fault prediction and intention-security protection in partially-observed DES.
Our results also introduced a novel information structure that effectively addresses the fundamental challenge in synthesizing supervisors for properties related to undetermined future behaviors.
Note that, in this work, the prediction horizon of interest must be a given finite horizon. 
As future work, we aim to extend our framework to observational properties involving infinite prediction horizons.

\appendix
\textbf{Proof of Proposition~\ref{prop:IS}} 
\begin{proof}
    We first prove $\state(\mathbb{I}^\S(\alpha))=\{\delta(s)\in X: s\in \mathcal{O}(\alpha)\}$ by induction on the length of $\alpha$. For simplicity, we denote $\mathcal{E}_\mathcal{O}(\alpha)=\{\delta(s)\in X: s\in \mathcal{O}(\alpha)\}$.

    \emph{Induction Basis}:
    Suppose that $|\alpha|=0$. Then we know that $\mathbb{I}^\S(\epsilon)=\imath_0^\S=\{(x_0,\mathbf{v})\}$. It is clear that $\{x_0\}=\mathcal{E}_\mathcal{O}(\epsilon)$. Therefore, the induction basis holds.

    \emph{Induction Step}:
    Now, suppose that $\state(\mathbb{I}^\S(\alpha))=\mathcal{E}_\mathcal{O}(\alpha)$ holds for $|\alpha|=k$, then we prove that it also holds for $\alpha\sigma\in P(\mathcal{L}(S/G))$, where $\sigma\in\Sigma_o$.
    According to Definition~\ref{def:sig-obs-inf-patt}, we have that 
    \begin{align}
        \state&(\mathbb{I}^\S(\alpha\sigma)) \nonumber \\
        &=\OR_\sigma(\UR_{S(\alpha)}(\state(\mathbb{I}^\S(\alpha)) \nonumber \\
        &=\OR_\sigma(\UR_{S(\alpha)}(\mathcal{E}_\mathcal{O}(\alpha)))  \nonumber \\
        &=\OR_\sigma\left(\left\{\delta(x,w)\in X: \!\!\!\!\!\!
        \begin{array}{cc}
             & x\in \mathcal{E}_\mathcal{O}(\alpha), \\
             & w\in(\Sigma_{uo}\cap S(\alpha))^*
        \end{array}\!\!\!\!
        \right\}\right) \nonumber \\
        &=\OR_\sigma(\mathcal{E}_S(\alpha)) \nonumber \\
        &=\{\delta(s\sigma)\in X: s\in P^{-1}_S(\alpha)\} \nonumber \\
        &=\{\delta(s\sigma)\in X: s\sigma\in \mathcal{O}(\alpha\sigma)\} = \mathcal{E}_\mathcal{O}(\alpha\sigma)
    \end{align}
    This completes the induction step.

    Then we prove that $\mathsf{vec}(\mathbb{I}^\S(\alpha))=\Xi^S(\alpha)$ by proving the following more strict claim:
    For each observation-state $\mathbb{A}^\S(\alpha)=(\imath_u,\{\imath_\sigma\}_{\sigma\in \Sigma_o\cap S(\alpha)})$ induced by $\alpha\in P(\L(S/G))$, we have $\mathbf{v}^{\delta(s)}_{\imath_u}=\xi^S(s)$ for each prediction vector, where $s\in P^{-1}_S(\alpha)$. Since $\mathbb{I}^\S(\alpha)\subseteq \imath_u$ and $\textsf{state}(\mathbb{I}^\S(\alpha))=\mathcal{E}_\mathcal{O}(\alpha)$, the above claim immediately leads to our conclusion.
    
    We prove this by induction on the dimension of the prediction vector $\mathbf{v}^{\delta(s)}_{\imath_u}$. 

    \emph{Induction Basis}:
    We first prove that 
    $\mathbf{v}^{\delta(s)}_{\imath_u}[0]=\xi^S(s)[0]$ holds for all $s\in P^{-1}_S(\alpha),  \alpha\in P(\mathcal{L}(S/G))$. According to Definition~\ref{def:inf-consis}, every augmented state $\tilde{x}=(x,\mathbf{v})\in \imath_u$ is consistent for the current instant, i.e., $\mathbf{v}[0]=\texttt{Y}$ if $x\in X_C$ and $\mathbf{v}[0]=\texttt{N}$ if $x\not\in X_C$. 
    Therefore, we have $\mathbf{v}^{\delta(s)}_{\imath_u}[0]=\chi_C(\{\delta(s)\})$ according to the definition of $\chi_C$.
    On the other hand, we also have $\xi^S(s)[0]=\chi_C(\{\delta(s)\})$ according to Definition~\ref{def:pre-vec}. Moreover, we have $\textsf{state}(\imath_u)=\mathcal{E}_S(\alpha)$ according to Definition~\ref{def:sig-obs-inf-patt} and Equation~(29), i.e., $(\delta(s),\mathbf{v}^{\delta(s)}_{\imath_u})\in\imath_u$ for all $s\in P^{-1}_S(\alpha)$. Therefore, we can conclude that the induction basis holds.
    
    \emph{Induction Step}: Now we suppose that $\mathbf{v}^{\delta(s)}_{\imath_u}[k]=\xi^S(s)[k]$ holds for all $s\in P^{-1}_S(\alpha),  \alpha\in P(\L(S/G))$ for some $0\leq k\leq H-1$, then we prove that it also holds for the $k+1$-th dimension. According to Definition~\ref{def:inf-consis}, every augmented state $\tilde{x}=(x,\mathbf{v})\in \imath_u$ is consistent for the future instant $k+1$, i.e.,
    \begin{equation}
  \mathbf{v}[k+1]=  		
  \left\{
		\begin{array}{ll}
			 \texttt{Y}  & \text{if}\quad \forall (x',\mathbf{v'})\in \mathcal{R}_{\mathbb{A}^\S(\alpha)}(\tilde{x}): \mathbf{v'}[k]=\texttt{Y}  \\
			 \texttt{N}  & \text{if}\quad \forall (x',\mathbf{v'})\in \mathcal{R}_{\mathbb{A}^\S(\alpha)}(\tilde{x}): \mathbf{v'}[k]=\texttt{N}\\
			 \texttt{U}  & \text{otherwise} 
		\end{array}
		\right.  \nonumber
\end{equation}    

According to Definition~\ref{def:one-step-rea-set}, we know that for an augmented state $(\delta(s),\mathbf{v})\in\imath_u$, we have $\mathcal{R}_{\mathbb{A}^\S(\alpha)}((\delta(s),\mathbf{v}))=\{(\delta(s\sigma),\mathbf{v}'):\sigma\in\gamma\}$. Since for dimension $k$ we have $\mathbf{v}^{\delta(s\sigma)}_{\imath_u}[k]=\xi^S(s\sigma)[k]$, we can conclude that 

    \begin{equation}\label{eq:30}
  \mathbf{v}^{\delta(s)}_{\imath_u}[k+1]=  		
  \left\{
		\begin{array}{ll}
			 \texttt{Y}  & \text{if}\quad \forall \sigma\in\gamma: \xi^S(s\sigma)[k]=\texttt{Y}  \\
			 \texttt{N}  & \text{if}\quad \forall \sigma\in\gamma: \xi^S(s\sigma)[k]=\texttt{N}\\
			 \texttt{U}  & \text{otherwise} 
		\end{array}
		\right. 
\end{equation}    
On the other hand, we have $\xi^S(s)[k+1]=\chi_C(\text{Reach}_{k+1}(s))$, whose value depends on the relationship between $\text{Reach}_{k+1}(s)$ and $X_C$. Note that we have $\text{Reach}_{k+1}(s)=\text{Reach}_{k}(\text{Reach}_{1}(s))$ according to Equation~(\ref{eq:k-reachable-set}). We take the condition $\xi^S(s)[k+1]=\texttt{Y}$ as an example, which holds if  $\text{Reach}_{k+1}(s)\subseteq X_C$ according to the definition of $\chi_C$, and thus $\text{Reach}_{k}(\text{Reach}_{1}(s))\subseteq X_C$, which is equivalent to $\text{Reach}_{k}(s\sigma)\subseteq X_C$ for all $\sigma\in \gamma=S(P(s))$, i.e., $\xi^S(s\sigma)[k]=\texttt{Y}$ for all $\sigma\in \gamma$. According to Equation~(\ref{eq:30}), we can conclude that $\mathbf{v}^{\delta(s)}_{\imath_u}[k+1]=\xi^S(s)[k+1]$, which completes our induction step. 
\end{proof}

\textbf{Proof of Theorem~\ref{thm1}}
\begin{proof}
We first note that, by Proposition~\ref{prop:IS} and Definition~\ref{def:con-struc}, given an IS-based supervisor $S$, for any string $\alpha\in P(\mathcal{L}(S/G))$, the decision state $\mathbb{I}^\S(\alpha)$ reached is sufficient to check the value of $\Phi(\Xi^S(\alpha))$ since we have $\mathsf{vec}(\mathbb{I}^\S(\alpha))=\Xi^S(\alpha)$. Therefore, if all decision states in $\S$ is safe, then we have $\Phi(\Xi^S(\alpha))=1$ for all $\alpha\in P(\L(S/G))$, i.e., $S/G\models(X_C,\Phi)$.

Then we prove that the live decision state indeed captures the property of liveness. Mathematically, we have
\[
\forall \alpha\in P(\mathcal{L}(S/G)): \mathbb{I}^\S(\alpha) \text{ is live}\Leftrightarrow \mathcal{L}(S/G) \text{ is live}.
\]
By contrapositive, we know that $\L(S/G)$ is not live if there exist $s\in \mathcal{L}(S/G)$, such that for any $\sigma\in S(P(s))$, $\delta(s\sigma)$ is not defined. 
According to the proof of Proposition~\ref{prop:IS}, by taking the same string $s$, we have that the state $\delta(s)\in \state(\imath_u)$ has no successor state $\delta(s\sigma)$ defined, where $\mathbb{A}^\S(P(s))=(\imath_u,\{\imath_\sigma\})$. According to Definition~\ref{def:sig-obs-inf-patt}, we have that $\exists x\in \UR_\gamma(  \state(\mathbb{I}^\S(P(s))) ),\forall \sigma\in \gamma: \delta(x,\sigma)$ is not defined, i.e., $\mathbb{I}^\S(P(s))$ is not live. For another side, assume there exists a string $s\in \mathcal{L}(S/G)$ such that $\mathbb{I}^\S(P(s))$ is not live under \(\gamma=S(P(s))\), then there exist a string $s'\in \mathcal{L}(S/G)$ such that $P(s)=P(s')$ and $\delta(\delta(s'),\sigma)$ is not defined for any $\sigma\in S(P(s'))=S(P(s))$. We can also conclude that $S/G$ is not live by taking the same $s'$.
\end{proof}

\textbf{Proof of Lemma~\ref{lem:complete}}
\begin{proof}
We prove by showing that \(\mathbb{I}^{\B}_0\ne\emptyset\) in line~7 when a non-IS-based supervisor that solves Problem~\ref{problem} exists.

Assume there exists a language-based supervisor \(S:P(\mathcal{L}(G))\to\Gamma\) that solves Problem~\ref{problem}. We first construct a decision-observation structure \(B^S\) as follows:

We start by initializing a finite decision-state space
\[
\mathbb{I}^S=\big\{\{(\delta(s),\xi^S(s)):s\in\O(\alpha)\} \in\mathbb{I}: \alpha\in P(\L(S/G))\big\}.
\]
Then for each \(\imath\in\mathbb{I}^\S\), for each \(\alpha\in P(\L(S/G))\) such that \(\{(\delta(s),\xi^S(s)):s\in\O(\alpha)\}=\imath\), we define a transition \((\imath,S(\alpha),\mathcal{I})\) from decision-state \(\imath\) to observation-state \(\mathcal{I}\), where 
\begin{align}
    \mathcal{I}=\left(\begin{array}{cc}
         & \{(\delta(s),\xi^S(s)):s\in P^{-1}(\alpha)\},\\
         & \{\{(\delta(s),\xi^S(s)):s\in\O(\alpha\sigma)\}\}_{\sigma\in\Sigma_o\cap S(\alpha)}
    \end{array}
    \right)\nonumber
\end{align}
And we define transitions \((\mathcal{I},\sigma,\imath')\) for each \(\sigma\in\Sigma_o\cap S(\alpha)\), \(\imath'= \{(\delta(s),\xi^S(s)):s\in\O(\alpha\sigma)\}\).
Finally, we set \(\imath_0=\{(x_0,\xi^S(\epsilon))\}\) as the initial decision state.

By the above construction, for each observation sequence \(\alpha=\sigma_1\sigma_2...\sigma_n\in P(\mathcal{L}(S/G))\), it also induces an unique path in \(B^S\)
\begin{equation}
    \imath_0\xrightarrow{S(\epsilon)}\mathcal{I}_0\xrightarrow{\sigma_1}\imath_1\xrightarrow{S(\sigma_1)}\cdots \xrightarrow{\sigma_n}\imath_n \xrightarrow{S(\sigma_1...\sigma_{n})}\mathcal{I}_{n},\nonumber
\end{equation}
such that \(\mathsf{state}(\imath_n)=\{\delta(x)\in X:s\in\O(\alpha)\}\) and \(\mathsf{vec}(\imath_n)=\Xi^S(\alpha)\). 

Since \(S\) solves Problem~1, we can conclude that each decision-state \(\imath\) in \(B^S\) is safe, and is also live under each decision \(\gamma\) such that \((\imath,\gamma,\mathcal{I})\) is defined on \(\imath\). Also, according to the definition of \(\chi_C\) and \(\text{Reach}_k\), for each transition \((\imath,\gamma,\mathcal{I})\) defined in \(B^S\), we have \(\mathcal{I}\in\mathbb{A}(\imath,\gamma)\). Therefore, we have \(B^S\sqsubseteq B\) according to procedure \texttt{Expand}.

Next, we prove that \(B^S\sqsubseteq \texttt{Prune}(B)\), i.e., \(B\) is at least as large as \(B^S\) after calling procedure \texttt{Prune}. This conclusion directly follows the above construction, where each decision-state \(\imath\) in \(B^S\) has at least one successor observation-state, and each observation-state \(\mathcal{I}\) in \(B^S\) has all transitions defined within the corresponding feasible observation events, and thus are complete. Since we have \(B^S\sqsubseteq B\), all states that are complete in \(B^S\) are also complete in \(B\), and thus will not be removed in procedure \texttt{Prune}.

Finally, we can conclude that the initial state \(\imath_0\) of \(B^S\) is also an initial state of \(B\) after procedure \texttt{Prune}, i.e., \(\mathbb{I}^{\B}_0\ne\emptyset\), and Algorithm~\ref{alg:main} will not return "no solution exists" in this case.
\end{proof}

\textbf{Proof of Theorem~\ref{thm:perm}}
\begin{proof}
We prove by contradiction. Assume that there exists supervisor $S'$ such that $S'$ solves Problem~\ref{problem} and $\mathcal{L}(S /G) \subset \mathcal{L}(S'/G)$.  Without loss of generality, we assume that the supervisors $S$ and $S'$ are both \emph{irredundant} such that each enabled controllable event in the control decisions is useful. Since $\mathcal{L}(S /G) \subset \mathcal{L}(S'/G)$, we have
\[
\forall s \in \mathcal{L}(S/G) \subset \mathcal{L}(S'/G), \quad S(P(s)) \subseteq S'(P(s)).
\]
Then we can conclude that for any $s \in \mathcal{L}(S/G) \subset \mathcal{L}(S'/G)$ and $s' \in P^{-1}_{S}(P(s)) \subseteq P^{-1}_{S'}(P(s))$, we have
\[
\text{Reach}^{S}_{k}(s') \subseteq \text{Reach}^{S'}_{k}(s').
\]
According to the definition of the \(\chi_C\), we know that for any such $s'$, \(\xi^S(s')[i]=\texttt{Y}\implies \xi^{S'}(s')[i]\ne\texttt{N}\), \(\xi^S(s')[i]=\texttt{N}\implies \xi^{S'}(s')[i]\ne\texttt{Y}\) and \(\xi^S(s')[i]=\texttt{U}\implies \xi^{S'}(s')[i]=\texttt{U}\).

Moreover, since $\mathcal{L}(S /G) \subset \mathcal{L}(S'/G)$, there exist some observation sequence $\alpha=\sigma_1\sigma_2...\sigma_n \in P(\mathcal{L}(S/G))$, such that $S(\alpha) \subset S'(\alpha)$ and for any $\alpha'\in\overline{\{\alpha\}}\setminus\{\alpha\}$, we have $S(\alpha')=S'(\alpha')$. We now consider the structure \(\S\) and \(B^{S'}\), both of which are included in \(B\) after calling \texttt{Prune} according to Algorithm~\ref{alg:main}and the proof of lemma~2. Consider two paths
\begin{equation}
    \imath_0\xrightarrow{S(\epsilon)}\mathcal{I}_0\xrightarrow{\sigma_1}\imath_1\xrightarrow{S(\sigma_1)}\cdots \xrightarrow{\sigma_n}\imath_n \xrightarrow{S(\sigma_1...\sigma_{n})}\mathcal{I}_{n},\nonumber
\end{equation}
and
\begin{equation}
    \imath'_0\xrightarrow{S'(\epsilon)}\mathcal{I}'_0\xrightarrow{\sigma_1}\imath'_1\xrightarrow{S'(\sigma_1)}\cdots \xrightarrow{\sigma_n}\imath_n \xrightarrow{S'(\sigma_1...\sigma_{n})}\mathcal{I}_{n}\nonumber
\end{equation}
led by \(S\) and \(S'\), where we have \(\mathsf{state}(\imath_i)=\mathsf{state}(\imath'_i)\) for all \(i\in\{0,1,..,n\}\). Since \(\mathsf{vec}(\imath_i)=\Xi^S(\sigma_1..\sigma_i)\) and \(\mathsf{vec}(\imath'_i)=\Xi^{S'}(\sigma_1..\sigma_i)\) by our construction, we have \(\imath_i\leq_\texttt{U}\imath'_i\) for all \(i\in\{0,1,..,n\}\).

Starting from \(\imath_0\) and \(\imath'_0\), according to line~10 in the algorithm, we have \(\imath_0\not<_\texttt{U}\imath'_0\), also since \(\text{Reach}^{S}_{k}(s) \subseteq \text{Reach}^{S'}_{k}(s)\) for all \(s\in\L(S/G)\), we can conclude that \(\imath_0=\imath'_0\).
Then, according to line~2 in procedure \texttt{Extract}, we have \(\mathcal{I}_0\not<_\texttt{U}\mathcal{I}'_0\), and thus \(\imath_1\not<_\texttt{U}\imath'_1\), then we can also conclude that \(\imath_1=\imath'_1\).
Therefore, we eventually have \(\imath_n=\imath'_n\) due to the same reason, where we have \(S(\alpha)\subset S'(\alpha)\) by our assumption. However, this leads to a conflict with line~1 of procedure \texttt{Extract}, which states that \(S(\alpha)\not\subset S'(\alpha)\).
\end{proof}

\bibliographystyle{plain}
\bibliography{references}

\end{document}